\def\ps@pprintTitle{%
 \let\@oddhead\@empty
 \let\@evenhead\@empty
 \def\@oddfoot{}%
 \let\@evenfoot\@oddfoot}
\newcommand*{\LONG}{}
\newcommand{\LP}{\textit{\tiny LP}}
\algnewcommand{\IIf}[1]{\State\algorithmicif\ #1\ \algorithmicthen}
\algnewcommand{\EndIIf}{\unskip\ \algorithmicend\ \algorithmicif}
\algnewcommand{\IiIf}[1]{\State\algorithmicif\ #1\ \algorithmicthen}
\algnewcommand{\EndIiIf}{\unskip\ \algorithmicend\ \algorithmicif}
\algnewcommand{\IfThenElse}[3]{% \IfThenElse{<if>}{<then>}{<else>}
 \State \algorithmicif\ #1\ \algorithmicthen\ #2\ \algorithmicelse\ #3}
\algnewcommand{\EndIfThenElse}{\unskip\ \algorithmicend\ \algorithmicif}
\newtheorem{theorem}{Theorem}
\newtheorem{prop}[theorem]{Proposition}
\def\eps{\varepsilon}
\def\real{\hbox{\rm\vrule\kern-1pt R}}
\def\nat{\hbox{\rm\vrule\kern-1pt N}}
\def\l{\ell}
\newlength{\algowidth}
\newlength{\algowidthplus}
\newlength{\figwidth}
\begin{document}
\pagestyle{plain}

\begin{frontmatter}
\ifdefined\LONG
\title{Dynamic programming algorithms, efficient solution of the LP-relaxation and approximation schemes for the Penalized Knapsack Problem}
\else{}
\title{New exact approaches and approximation results for the Penalized Knapsack Problem}
\fi

%\corref{cor1}}
 %\cortext[cor1]{Corresponding author:}
 %\ead{}
\date{}

 \author[1,2]{Federico Della Croce}
 \author[3]{Ulrich Pferschy}
 \author[1]{Rosario Scatamacchia\corref{cor1}}
   %\ead[cor1]{rosario.scatamacchia@polito.it}
 \cortext[cor1]{Corresponding author: Rosario Scatamacchia}
 
  \address[1]{\small Dipartimento di Automatica e Informatica, Politecnico di Torino,\\ Corso Duca degli Abruzzi 24, 10129 Torino, Italy, \\{\tt \{federico.dellacroce, rosario.scatamacchia\}@polito.it }}
 \address[2]{CNR, IEIIT, Torino, Italy}
 \address[3]{\small Department of Statistics and Operations Research, University of Graz, Universitaetsstrasse 15, 8010 Graz, Austria, \\ {\tt pferschy@uni-graz.at}}

\begin{abstract}
We consider the 0--1 Penalized Knapsack Problem (PKP). Each item has a profit, a weight and a penalty and the goal is to maximize the sum of the profits minus the greatest penalty value of the items included in a solution. We propose an exact approach relying on a procedure which narrows the relevant range of penalties,
on the identification of a core problem and on dynamic programming. The proposed approach turns out to be very effective in solving hard instances of PKP and compares favorably both to commercial solver CPLEX 12.5 applied to the ILP formulation of the problem and to the best available exact algorithm in the literature. Then we present a general inapproximability result and investigate several relevant special cases which permit fully polynomial time approximation schemes (FPTASs). 
\end{abstract}

\begin{keyword} Penalized Knapsack problem \sep Exact algorithm \sep  Dynamic Programming \sep Approximation Schemes 
\end{keyword}

\end{frontmatter}

\section{Introduction}

We consider the 0--1 Penalized Knapsack Problem (PKP), as introduced in \cite{CeRi06}. 
PKP is a generalization of the classical 0--1 Knapsack Problem (KP) (\cite{MarTot90}, \cite{MarPisTot00}, \cite{KePfPi04}), where each item has a profit, a weight and a penalty. 
The problem calls for maximizing the sum of the profits minus the greatest penalty value of the items included in a solution. 

PKP has applications in resource allocation problems with a bi--objective function involving the maximization of the sum of profits against the minimization of the maximum value of a feature of interest. As an example, applications arise in batch production systems where the processing time/cost of batches of products depends on the maximum processing time of each product.
PKP also occurs as sub--problem within algorithmic frameworks designed for more complex problems. For instance, PKP arises as a pricing sub--problem in branch--and--price algorithms for the two--dimensional level strip packing problem in \cite{LoMaMO02}. 

PKP is $\mathcal{NP}$--hard in the weak sense since it contains the standard KP as special case, namely when the penalties of the items are equal to zero, and it can solved by a pseudo--polynomial algorithm. 
In \cite{CeRi06}, it is mentioned that a dynamic programming approach with time complexity $O(n^2c)$ can be easily determined, with $n$ and $c$ being the number of items and the capacity of the knapsack respectively. Also, an exact algorithm is presented and successfully tested on instances with $1000$ variables while running into difficulties on instances with 10000 variables. 
%The approach relies on solving standard knapsack problems induced by choosing 
%the item yielding the penalty value, denoted as \textit{leading item}, and discarding 
%the items with higher penalties. 
%The order of the sub--problems to be explored is greedily determined 
%according to the upper bounds given by their linear relaxation. 

In this work we propose an exact approach that relies on a procedure narrowing the relevant range of penalties and on dynamic programming.  
First, a straightforward pseudo--polynomial algorithm running with complexity $O(\max\{n\log n, nc\})$ is lined out. % lower than $O(n^2c)$ is lined out. 
Then, as our major contribution, we devise a dynamic programming algorithm based on a core problem and the algorithmic framework proposed in \cite{Pis97}. 
We investigate the effectiveness of our approach on a large set of instances generated according to the literature and involving different types of correlations between profits, weights and penalties. The proposed approach turns out to be very effective in solving hard instances and compares favorably to both solver CPLEX 12.5 and the exact algorithm in \cite{CeRi06}, successfully solving all instances with up to 10000 items.

Finally, we provide additional theoretical insights into the problem,
in particular about upper bounds. 
We derive a surprising negative approximation result, but also investigate several relevant special cases which still permit fully polynomial time approximation schemes 
(FPTAS). 
 
The paper is organized as follows. In Section \ref{sec:themodelPKP}, the linear programming formulation of the problem is introduced. In Section \ref{sec:PropPKP}, we provide some insights on structure and properties of PKP. We outline the proposed exact solution approach in Section \ref{PKP:ExactApp} and discuss the computational results in Section \ref{sec:CompResPKP}. 
Approximation results are presented in Section~\ref{PKP:APPROX}. 
Section~\ref{sec:PKPconclusion} concludes the paper with some remarks.
\ifdefined\LONG
Note that this report is a slightly extended version of a journal paper.
\else{}
Some of the proofs and additional computational results 
can be found in an accompanying technical report
\cite{longversion}
which is an extended version of this paper.
\fi{}

\section{Notation and problem formulation}
\label{sec:themodelPKP}

In PKP a set of $n$ items is given together with a knapsack with capacity $c$.
Each item $j$ has a weight $w_{j}$, a  profit $p_{j}$ and a penalty $\pi_j$. 
We will assume that all data are non--negative integer values.
Note that \cite{CeRi06} assume only the integrality of weights.
Similar to their work, all our algorithmic contributions 
do not require integral profits or penalties.
However, the theoretical approximation results of Section~\ref{PKP:APPROX}
partially rely on integral values.
We will assume that items are sorted in decreasing order of penalties, i.e.
\begin{equation}\label{eq:sort}
\pi_1 \geq \pi_2 \geq \ldots \geq \pi_n.
\end{equation}
The problem calls for maximizing the total profit minus the greatest penalty value of the selected items without exceeding the knapsack capacity $c$.
In order to derive an ILP-formulation, we associate with each item $j$ a binary variable $x_{j}$ such that $x_j=1$ iff item $j$ is placed into the knapsack.
Also, we associate a real variable $\Pi$ with the decrease in the objective produced by the highest penalty value of the items placed in the knapsack.
The problem can be formulated as follows:
\begin{eqnarray}
\mbox{(PKP)}\qquad \text{maximize} && \sum_{j=1}^n p_j x_j  - \Pi \label{eq:obj}\\
	\text{subject to}
	&& \sum_{j=1}^n w_j\, x_j\leq c\label{eq:weight}\\
	&& \pi_j\, x_j \leq \Pi \qquad j= 1,\dots,n \label{eq:penalty}\\
	&& x_j \in\{0,1\} \qquad j= 1,\dots,n \label{eq:binary}\\
	&& \Pi \in \real \label{eq:real}
\end{eqnarray}
(\ref{eq:weight}) is the standard capacity constraint.  
Constraints (\ref{eq:penalty}) ensure that $\Pi$ will carry the highest penalty value in any feasible solution of PKP;
variable $\Pi$ can be defined in (\ref{eq:real}) as real and will 
always attain one of the values $\pi_j$ in an optimal solution.
The objective function (\ref{eq:obj}) maximizes the sum of the profits minus the greatest penalty value of the selected items. 
Its optimal function value will be denoted by $z^*$.
For any considered sub--problem $PP$ the optimal objective function value will be written as $z(PP)$.
The item yielding the optimal penalty will be called the {\em leading item} and is denoted by $j^*$.

For further analysis, we will consider the penalty value $\Pi$ as a fixed parameter 
and define $PKP(\Pi)$ as an instance of PKP where the penalty value in the objective function is fixed to $\Pi$.
This simply means that all items $j$ with $\pi_j > \Pi$ are eliminated
from consideration and the remaining problem reduces to a standard 0--1 knapsack problem.
Obviously, $PKP(\Pi)$ only needs to be considered for the $n$ relevant choices $\Pi \in \{\pi_1,\ldots, \pi_n\}$.
This implies that PKP can be solved to optimality by solving at most $n$ KP instances of type $PKP(\Pi)$ and taking the maximum objective function value.
%Clearly, this approach is not expected to be effective as soon as the number of items and %thus of the penalty values increases.

\section{Generalities and algorithmic insights}
\label{sec:PropPKP}

We discuss here structure and properties of PKP. %We first show how to effectively compute upper bounds on the problem. 
We provide a characterization of the linear relaxation of PKP and propose an improved variant of the procedure proposed in \cite{CeRi06} for computing upper bounds on the sub--problems induced by the selection of the leading item. Then, we outline a basic dynamic programming algorithm with running time $O(\max\{n\log n, nc\})$. %lower than $O(n^2c)$. 
%(To discuss: could the linear relaxation part fit in the paper as follows?
%As we stated, the corresponding results can be used for computing an upper bound on PKP, as commonly is done in algorithmic frameworks. At the same time, the $O(n\log n)$ complexity may not sound appealing since we can compute better bounds with the same complexity. Also, we actually never use the LP relaxation bound in the algorithm.)

\medskip

\subsection{Computing upper bounds}
\label{PKP:UpperBounds}

%\subsection{Linear relaxation of PKP}
%\label{subsec:LP}

A natural upper bound on PKP is given by the optimal solution value of the linear relaxation of the problem, denoted as $PKP^{LP}$, where constraints (\ref{eq:binary}) are replaced by $x_j \in [0,1]$, i.e.\ items can be split and only a fractional part is packed. In this case a proportional part of the penalty applies. 
The optimal objective function value will be denoted by $z^{LP}$.
The LP-relaxation parametrically depending on $\Pi$ will be denoted as $PKP^{LP}(\Pi)$
with optimal solution value $z^{LP}(\Pi)$. 
%After sorting the items in decreasing order of efficiencies $p_j/w_j$, 
In the LP-relaxation for a given value $\Pi$, each variable $x_j$ is upper bounded by the following expression, since items with penalty exceeding $\Pi$ are reduced by scaling:
\begin{equation}\label{eq:fracbound}
x_j \leq \min\{1, \Pi/\pi_j\}.
\end{equation}
%This means that all items with a penalty exceeding the current $\Pi$, 
%are reduced by scaling
%such that their penalty contribution is equal to $\Pi$.
After imposing this bound the problem reduces again to an instance of KP (for fixed $\Pi$) 
for which the LP-relaxation is trivial.
%As usual, the split item is the first item that would exceed the capacity if set to its
%upper bound according to (\ref{eq:fracbound}).
We can give 
%Considering the optimal solution value of  $PKP^{LP}(\Pi)$ as a function in $\Pi$ we get
the following characterization for $PKP^{LP}(\Pi)$.

\begin{theorem}\label{th:concave}
$z^{LP}(\Pi)$ is a piecewise-linear concave function in $\Pi$ consisting of at most $2n$ linear segments.
\end{theorem}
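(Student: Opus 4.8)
\medskip
\noindent The plan is to first fix $\Pi$ and to exploit, as already noted above, that $PKP^{LP}(\Pi)$ is an ordinary fractional knapsack problem with the item upper bounds $u_j(\Pi):=\min\{1,\Pi/\pi_j\}$ of (\ref{eq:fracbound}) together with the constant term $-\Pi$ in the objective. Writing $K(\Pi)$ for the optimal value of the underlying knapsack LP $\max\{\sum_{j} p_j x_j : \sum_{j} w_j x_j \le c,\ 0\le x_j \le u_j(\Pi)\}$, we have $z^{LP}(\Pi)=K(\Pi)-\Pi$. Since $-\Pi$ is affine, it suffices to prove that $K(\Pi)$ is concave and piecewise linear with $O(n)$ pieces, which I would treat in two separate steps.

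For concavity I would argue directly on feasible solutions. Fix $\Pi_1,\Pi_2$ and $\lambda\in[0,1]$, set $\bar\Pi=\lambda\Pi_1+(1-\lambda)\Pi_2$, let $x^{(1)},x^{(2)}$ be optimal for $K(\Pi_1),K(\Pi_2)$, and put $\bar x=\lambda x^{(1)}+(1-\lambda)x^{(2)}$. The weight constraint and non-negativity are preserved by convex combination, and the crucial observation is that each bound $u_j(\Pi)=\min\{1,\Pi/\pi_j\}$, being the pointwise minimum of the affine functions $\Pi\mapsto \Pi/\pi_j$ and $\Pi\mapsto 1$, is concave in $\Pi$; hence $\bar x_j \le \lambda u_j(\Pi_1)+(1-\lambda)u_j(\Pi_2)\le u_j(\bar\Pi)$. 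Thus $\bar x$ is feasible for $K(\bar\Pi)$, so $K(\bar\Pi)\ge \lambda K(\Pi_1)+(1-\lambda)K(\Pi_2)$, and $K$ -- hence $z^{LP}$ -- is concave.

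For piecewise linearity and the segment count I would use the Dantzig (greedy) description of the fractional knapsack optimum: ordering the items by non-increasing ratio $p_j/w_j$ (an order fixed independently of $\Pi$, and in general different from the penalty order (\ref{eq:sort})), the optimum packs items up to their bounds $u_j(\Pi)$ until a single critical item $s(\Pi)$ is reached. On any interval of $\Pi$ on which (i) no penalty value is crossed and (ii) $s(\Pi)$ is constant, every $u_j(\Pi)$ equals either $1$ or $\Pi/\pi_j$, so $K(\Pi)$ is a fixed affine expression in $\Pi$; consequently breakpoints can occur only at the penalty values $\Pi=\pi_j$ (at most $n$ of them) or where the critical item changes. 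Because raising $\Pi$ relaxes all upper bounds simultaneously, the cumulative weights $\sum_{k \le t} w_k u_k(\Pi)$ in the efficiency order are non-decreasing in $\Pi$, so the critical index $s(\Pi)$ runs monotonically through $\{1,\dots,n\}$ and can change at most $n-1$ times. Combining the two kinds of events then bounds the number of linear segments by $2n$.

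The reduction and the concavity argument are routine; the delicate point, on which I would spend most of the care, is the breakpoint count -- specifically, verifying that the two event types above are exhaustive and that the monotonicity of $s(\Pi)$ (which must survive the penalty thresholds, where the $u_j$ switch from increasing to constant) yields exactly the claimed bound of $2n$ segments rather than a larger multiple of $n$.
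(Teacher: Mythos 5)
Your proof is correct, but it follows a genuinely different route from the paper's, so a comparison is worthwhile. The paper argues locally: it computes the change of $z^{LP}$ under a perturbation $\Pi\to\Pi-\eps$ in closed form (their equation (\ref{eq:slope})), identifies three events that can terminate a linear segment (a penalty value $\pi_k$ is crossed; the split item becomes integral; the split item hits its own penalty bound), verifies in each case that the slope can only increase as $\Pi$ decreases --- which yields concavity --- and bounds the number of segments by charging each breakpoint to an item, each item being chargeable at most twice (once as new split item, once on joining the bounded set $S$). You decouple the two assertions. Your concavity argument --- convex combinations of optima remain feasible because each bound $u_j(\Pi)=\min\{1,\Pi/\pi_j\}$ from (\ref{eq:fracbound}) is concave in $\Pi$ --- is global, needs no case analysis and no prior knowledge that the function is piecewise linear; it is cleaner than the paper's slope-monotonicity argument and generalizes verbatim to any parametric LP whose bounds are concave in the parameter. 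Your breakpoint count is essentially the paper's count in different clothing: your two event types cover exactly the paper's three cases (case 1 is a penalty crossing, cases 2 and 3 are both critical-item jumps), and the monotonicity of $s(\Pi)$ that you invoke is precisely the fact the paper uses implicitly when asserting that each item becomes the split item at most once. What the paper's route buys in exchange is the explicit slope formula (\ref{eq:slope}), which is reused algorithmically afterwards (the binary search over $\Pi$ and the $O(n\log n)$ and $O(n\log\log n)$ solutions of $PKP^{LP}$ all evaluate the sign of that slope); your proof would have to extract computable slopes from the greedy description instead. The one detail left to pin down is the one you flag yourself: $s(\Pi)$ ranges over $\{1,\dots,n\}$ together with the state ``no critical item'' (everything fits), so a blunt count gives $n$ jumps and hence $2n+1$ segments; the bound of $2n$ is recovered by observing that the extreme events coincide with the boundary of the relevant penalty range (e.g.\ the crossing at $\Pi=\pi_1$) --- a mild looseness that the paper's own per-item charging shares.
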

\begin{proof}

Consider an arbitrary value of $\Pi$ and the corresponding solution 
$x^{LP}(\Pi)$.
Let $S$ denote the set of items $j$ with $x^{LP}_j(\Pi)=\Pi/\pi_j$,
i.e.\ all items whose values are currently bounded by the considered penalty value $\Pi$.
The current split item will be denoted as $s$.

We analyze the slope on the left-hand side of $(\Pi,z^{LP}(\Pi))$ by considering the change of the function
implied by a decrease of the penalty bound from $\Pi$ to $\Pi-\eps$
for some small $\eps>0$.
Formally, this change $\delta(\Pi)$ is given as follows:
\begin{eqnarray}
\delta(\Pi)=
z^{LP}(\Pi)-z^{LP}(\Pi-\eps) &=& -\,\eps 
+\underbrace{\sum_{j \in S} \frac{\eps}{\pi_j} p_j}_{\mbox{\tiny reduction of items in $S$}}
-\underbrace{\frac{p_s}{w_s} \sum_{j \in S}\frac{\eps}{\pi_j} w_j}_{\mbox{\tiny increase of split item}} \nonumber  \\
&=&\eps\left(-1+\sum_{j \in S} \frac{w_j}{\pi_j}   \underbrace{\left(\frac{p_j}{w_j}-\frac{p_s}{w_s}\right)}_{\geq 0}  \right) 
\label{eq:slope}
\end{eqnarray}
This expression can be positive or negative, but it shows that $\delta(\Pi)$ is proportional to $\eps$.
Thus, in a neighborhood of $(\Pi, z^{LP}(\Pi))$ the function consists of a linear piece which will end in one of the following three cases:
\begin{enumerate}
\item $\Pi-\eps = \pi_k$ for some $k \not\in S$.
This means that lowering $\Pi$, a new item is found for inclusion in $S$.
Plugging in the extended set $S$ in (\ref{eq:slope}) will clearly increase the change $\delta(\Pi)$.
\item $x_s$ reaches $1$.
This means that the split item becomes integral and item $s+1$ becomes the new split item.
Thus, we replace $\frac{p_s}{w_s}$ by $\frac{p_{s+1}}{w_{s+1}}$ in (\ref{eq:slope}) again implying an increase of $\delta(\Pi)$.
\item $x_s$ reaches $(\Pi-\eps)/\pi_s$. This means that the increase of the split item reaches the lowered penalty bound.
In this case, $s$ is included in $S$ and $s+1$ becomes the new split item.
Again, $\delta(\Pi)$ is increased by combining both of the above arguments.
\end{enumerate}
Summarizing, we have shown that starting with an arbitrary value of $\Pi$ and 
decreasing $\Pi$, $z^{LP}(\Pi)$ consists of a linear piece which ends with some $\Pi'\leq \Pi$ 
in one of three possible configurations.
The slope of this linear piece is given by $\frac{\delta(\Pi)}{\eps}$.
The preceding linear segment of $z^{LP}(\Pi')$ will have an {\em increased} change $\delta(\Pi')$
if $\Pi'$ is further decreased. 
This means that the previous linear segment at $z^{LP}(\Pi')$ has a {\em larger slope} than
$z^{LP}(\Pi)$.
Thus, the slope of $z^{LP}(\Pi)$ is {\em decreasing} with increasing $\Pi$
which proves the concavity of $z^{LP}(\Pi)$.

Starting the above procedure with $\Pi=\max_{j=1}^n \pi_j$ and reducing $\Pi$ iteratively until $\Pi=0$,
it is clear that each item may cause the end of a linear segment of $z^{LP}(\Pi)$ at most twice:
Once, by becoming a new split item and once by being included in set $S$.
Each such event can occur at most once for each item.
Therefore, there can be at most $2n$ linear pieces.
\end{proof}

Exploiting this characterization one can easily construct a solution algorithm based on binary search over the penalty space with $O(n\log \pi_1)$ time.\\
\ifdefined\LONG
Some more effort is necessary to reach a binary search only over the $n$ relevant penalty values, which yields the following proposition.

\begin{prop}\label{th:PKPLPnlogn}
$PKP^{LP}$ can be solved in $O(n\log n)$ time.
\end{prop}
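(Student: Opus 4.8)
The plan is to compute $z^{LP}=\max_{\Pi} z^{LP}(\Pi)$ by exploiting the concavity established in Theorem~\ref{th:concave}: since a concave piecewise-linear function is maximized exactly where its slope passes from nonnegative to negative, the whole task reduces to locating the single interval on which this sign change occurs. As a preprocessing step I would sort the items once by efficiency $p_j/w_j$ in $O(n\log n)$ time (the penalties are already sorted by (\ref{eq:sort})). After this sorting, for any fixed $\Pi$ both the value $z^{LP}(\Pi)$ and the one-sided slope given by (\ref{eq:slope}) can be evaluated in $O(n)$ time by a single greedy scan that fills the capacity using the bounds $\min\{1,\Pi/\pi_j\}$ and identifies the split item $s$.

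The first stage is an outer binary search over the $n$ candidate penalties $\pi_1\ge\cdots\ge\pi_n$. Because the slope of $z^{LP}$ decreases monotonically in $\Pi$, evaluating the one-sided slope at a trial value $\pi_k$ via (\ref{eq:slope}) tells me whether the maximizer lies above $\pi_k$, below $\pi_k$, or coincides with it. Hence $O(\log n)$ evaluations, each costing $O(n)$, suffice either to detect that the optimum is attained at some $\pi_k$ or to confine it to a single open interval $(\pi_{k+1},\pi_k)$; this stage runs in $O(n\log n)$.

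The second stage maximizes $z^{LP}(\Pi)$ inside the localized interval, where the set $S$ of scaled items (those with $\pi_j>\Pi$) is constant and only the split item varies. The key observation is that, taken in efficiency order, the cumulative consumed weight $\sum w_j\min\{1,\Pi/\pi_j\}$ is a nondecreasing linear function of $\Pi$ on this interval, so the split item moves \emph{monotonically} as $\Pi$ grows; consequently the $O(n)$ linear pieces inside the interval occur in a fixed order with decreasing slopes. After recomputing, in $O(n)$ time, the prefix sums of $w_j$ and of $w_j/\pi_j$ along the efficiency order for this fixed $S$, I would binary search over the split position to find the piece on which the (constant) slope from (\ref{eq:slope}) switches sign, and then read off the exact optimal $\Pi$ and corresponding value analytically. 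This stage also costs $O(n\log n)$, which yields the claimed bound overall.

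I expect the delicate part to be the second stage: proving that the split item is monotone in $\Pi$ within the interval and handling the one-sided slopes correctly at the breakpoints, so that the inner binary search is guaranteed to bracket the sign change of the slope and terminate at the genuine maximizer rather than at a point forced merely by a change of $S$. By contrast, the outer search is routine once concavity and the $O(n)$ evaluation of (\ref{eq:slope}) are in place.
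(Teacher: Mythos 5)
Your proposal is correct and follows essentially the same route as the paper's proof: an outer binary search over the $n$ penalty values exploiting the concavity of Theorem~\ref{th:concave}, with each slope evaluation via (\ref{eq:slope}) done in $O(n)$ time, localizing the optimum to an interval between two consecutive penalties where the set of scaled items is fixed, followed by a resolution over candidate split items inside that interval. The only (minor) difference lies in the second stage --- the paper performs a linear $O(n)$ scan over the candidate split items, maximizing each candidate's profit $P_{\overline{s}}(\Pi)$, which is linear in $\Pi$, over its validity interval with incrementally updated coefficients, whereas you binary search for the sign change of the slope using prefix sums and the monotonicity of the split item in $\Pi$; both variants stay within the claimed $O(n\log n)$ bound.
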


%\ifdefined\LONG

\begin{proof}
Algorithmically, one can easily exploit the structure established in Theorem~\ref{th:concave} by performing a binary search over $\Pi$ to determine a maximum\footnote{Note that the maximum  is not necessarily unique, since there may exist a linear segment of $z^{LP}(\Pi)$ with slope $0$.} of the concave function $z^{LP}(\Pi)$.
For each query value $\Pi$, one can compute the split item in linear time and also assemble the corresponding set $S$ in one pass through the set of items. Thus, for each query value $\Pi$ the sign of the slope can be calculated from (\ref{eq:slope}) in linear time.\\
Applying the binary search over all possible penalty values would yield a total running time of $O(n \log \pi_1)$ which is polynomial in the size of the (binary) encoded input, i.e.\ weakly polynomial. To obtain a strongly polynomial time algorithm whose running time depends only on the number of input values, we can first perform a binary search over all $n$ values $\pi_j$ and thus compute in $O(n\log n)$ time the interval of two consecutive penalties $[\pi_k,\pi_{k-1}]$ for some $k$ (with $\pi_{k-1} \geq \pi_k$ according to (\ref{eq:sort})) which will include the optimal penalty value $\Pi^\LP$. \\ 
Let us denote the optimal split item by  $s^{\LP}$ and the split items associated with penalties $\pi_{k-1}$ and $\pi_k$ by $s_{k-1}$ and $s_{k}$ respectively. If we consider the item sorted by decreasing $\frac{p_j}{w_j}$, item $s_{k-1}$ will precede item $s_{k}$ in the ordering.\\
If $\pi_{k-1} = \pi_k$, clearly we have $s^\LP=s_{k-1}=s_{k}$ and $\Pi^{LP}=\pi_{k-1}=\pi_k$. Otherwise, we have to find $s^\LP$ in the interval $[s_{k-1},s_k]$ and related $\Pi^{LP}$ in the interval $[\pi_k,\pi_{k-1}]$. \\
Let us consider a generic item $\overline{s}$ as candidate for the split item. The best penalty $\Pi$ associated with it can be calculated as follows. Given the interval $[\pi_k,\pi_{k-1}]$, we set $x_j = 1$ ($j=1, \dots, \overline{s} - 1$) if $\pi_j \leq \pi_k$, $x_j = \frac{\Pi}{\pi_j }$ otherwise. 
This implies that the weight and profit sums of the items preceding $\overline{s}$ are linear functions of $\Pi$ in the form
\begin{equation}
\label{wsumS-1)}
\sum_{j=1}^{\overline{s}-1} w_j x_j =  \gamma_1 \Pi + \gamma_2,
\end{equation}
\begin{equation}
\label{psumS-1)}
\sum_{j=1}^{\overline{s}-1} p_j x_j =  \theta_1 \Pi + \theta_2,
\end{equation}
with non-negative coefficients $\gamma_1$, $\gamma_2$, $\theta_1$, $\theta_2$ determined according to the above $x_j$ setting.
Item $\overline{s}$ can be the split item if and only if $\sum_{j=1}^{\overline{s}-1} w_j x_j < c$ and its value $x_{\overline{s}}$ fulfills the capacity (i.e. $x_{\overline{s}}=\frac{c -\sum_{j=1}^{\overline{s}-1} w_j x_j}{w_{\overline{s}}}$) while satisfying the bound (\ref{eq:fracbound}). 
Correspondingly, the feasible interval of $\Pi$, denoted as $I_{\overline{s}}(\Pi)$, which allows $\overline{s}$ to be the split item is defined by the following system of inequalities:

\begin{equation}
\label{SystemI(Pi)}
I_{\overline{s}}(\Pi) := \begin{cases} \pi_k \leq \Pi \leq \pi_{k-1}\\
 \gamma_1 \Pi + \gamma_2  \leq c - \eps \\ 
\frac{c - \gamma_1 \Pi - \gamma_2}{w_{\overline{s}}} \leq \beta \quad \mbox{ with } \beta =
\begin{cases}
1 \mbox{  }\mbox{ if $\pi_{\overline{s}} \leq \pi_k$;}\\
\frac{\Pi}{\pi_{\overline{s}}} \mbox{ otherwise}
\end{cases}
\end{cases}
\end{equation} 
Item $\bar{s}$ is a relevant candidate for the split item only if the corresponding interval $I_{\overline{s}}(\Pi)$ is non-empty. In such a case, the overall profit given by $\overline{s}$ as split item is 
\begin{equation}
\label{psumS}
P_{\overline{s}}(\Pi) = \sum_{j=1}^{\overline{s}-1} p_j x_j + p_{\overline{s}}x_{\overline{s}} - \Pi = (\theta_1 - \frac{p_{\overline{s}}}{w_{\overline{s}}}\gamma_1 - 1)\Pi + \theta_2 + \frac{p_{\overline{s}}}{w_{\overline{s}}}(c - \gamma_2)
\end{equation}
and will be maximized by choosing either the left extreme of $I_{\overline{s}}(\Pi)$ if the term $(\theta_1 - \frac{p_{\overline{s}}}{w_{\overline{s}}}\gamma_1 - 1) < 0$ or the right extreme otherwise.\\
Summarizing, the best penalty value associated with a candidate item can be computed in constant time if coefficients $\gamma_1$, $\gamma_2$, $\theta_1$, $\theta_2$ are given.
Hence, we may first consider item $s_{k-1}$ as candidate for $s^\LP$ and compute related coefficients in (\ref{wsumS-1)})--(\ref{psumS-1)}) and penalty value. Then, we iteratively move to the next item after updating coefficients $\gamma_1$, $\gamma_2$, $\theta_1$, $\theta_2$ in one pass due the inclusion of the previous candidate item among items $j=1,\dots,\overline{s}-1$. 
After the evaluation of item $s_{k}$, the optimal split item $s^\LP$ and penalty $\Pi^\LP$ are returned.
Since the execution time of this part is bounded by $O(n)$, the overall complexity for solving the LP relaxation is $O(n \log n)$. 
\end{proof}

In fact, we can do even better by interleaving a median search for the optimal penalty value with a median search for the split item.

\begin{prop}\label{th:PKPLPnloglogn}
$PKP^{LP}$ can be solved in $O(n\log\log n)$ time.
\end{prop}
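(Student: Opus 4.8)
The plan is to eliminate the bottleneck of Proposition~\ref{th:PKPLPnlogn}. There the $O(n\log n)$ running time arises because the binary search for $\Pi^{\LP}$ performs $\Theta(\log n)$ queries, and each query recomputes the split item from scratch by an $O(n)$ weighted-median selection. I would instead run the search for the optimal penalty $\Pi^{\LP}$ and the search for its split item $s^{\LP}$ \emph{simultaneously}, maintaining at all times a penalty interval $[\pi_{lo},\pi_{hi}]$ known to contain $\Pi^{\LP}$ together with a candidate set $I$ of items, ordered by efficiency $p_j/w_j$, known to contain $s^{\LP}$, and never touching a resolved item twice. A useful preprocessing observation, exploiting that the items are already sorted by penalty via~(\ref{eq:sort}), is that the set $S$ in~(\ref{eq:slope}) is always a prefix in the penalty order; hence the two quantities $\sum_{j\in S}\frac{p_j}{\pi_j}$ and $\sum_{j\in S}\frac{w_j}{\pi_j}$ are prefix sums computable once in $O(n)$, after which the slope $\delta(\Pi)/\eps$ at any probe can be evaluated in $O(1)$ \emph{given} the split item's ratio $p_s/w_s$. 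Thus the only genuinely costly ingredient of a query is locating the split item.

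The two searches couple monotonically, which is the structural fact I would establish first. By Theorem~\ref{th:concave} the function $z^{\LP}(\Pi)$ is concave, so the sign of the slope in~(\ref{eq:slope}) is monotone in $\Pi$; a single slope evaluation therefore tells us in which half of $[\pi_{lo},\pi_{hi}]$ the maximizer lies. Conversely, a direct exchange argument shows that as $\Pi$ decreases the scaled set $S$ shrinks and the split item of $PKP^{\LP}(\Pi)$ moves monotonically in the efficiency order. Consequently, restricting the penalty interval restricts the admissible range of split items and vice versa, so each resolved probe discards a sub-range from \emph{both} $[\pi_{lo},\pi_{hi}]$ and $I$, and the items removed from $I$ are charged only once.

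To reach $O(n\log\log n)$ rather than merely telescoping to $O(n\log n)$, I would organize the computation in phases that shrink the penalty candidate set from $m$ to $\Theta(\sqrt m)$ in $O(n)$ amortized time each: sample $\Theta(\sqrt m)$ probe penalties, locate by the monotone slope sign the consecutive pair that brackets the maximizer of the concave $z^{\LP}(\cdot)$, and tighten $I$ accordingly; since $m$ then falls doubly-exponentially, only $O(\log\log n)$ phases occur. The main obstacle I anticipate is precisely keeping each phase linear, and it stems from the mismatch between the two relevant orders — $S$ is a prefix in the \emph{penalty} order while $s$ lives in the \emph{efficiency} order, and the residual capacity available to the split item depends on $\Pi$ through $S$. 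The delicate part is therefore to batch the $\Theta(\sqrt m)$ slope evaluations of a phase into a single selection-based sweep over the currently unresolved items, and to argue via the monotonicity above that every item takes part in only $O(1)$ probe evaluations per phase, so that the per-phase work is $O(n)$ and the total is $O(n\log\log n)$.
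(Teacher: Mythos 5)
Your plan contains a genuine gap, and it sits exactly where the theorem's difficulty lies. The supporting claim that ``the set $S$ in (\ref{eq:slope}) is always a prefix in the penalty order'' is false: $S$ consists of the items that are simultaneously bounded by the penalty ($\pi_j > \Pi$) \emph{and} actually packed, i.e.\ of efficiency higher than the current split item's. Membership in $S$ is thus governed by two incomparable orders at once, so the quantities $\sum_{j\in S} p_j/\pi_j$ and $\sum_{j\in S} w_j/\pi_j$ are two-dimensional partial sums (over a penalty-prefix intersected with an efficiency-prefix), not one-dimensional prefix sums computable once in $O(n)$. The same two-dimensionality already infects the location of the split item itself, since the capacity consumed by the scaled items is $\Pi\sum_{j} w_j/\pi_j$ over precisely this $\Pi$-dependent set. (Incidentally, $S$ \emph{grows} as $\Pi$ decreases --- cf.\ case 1 in the proof of Theorem~\ref{th:concave} --- rather than shrinks.) Once the $O(1)$-per-probe claim collapses, so does the assertion that a phase of $\Theta(\sqrt m)$ probes costs $O(n)$: evaluating even one probe from scratch is an $O(n)$ weighted-median computation, and the batching that would make a whole phase linear is exactly the step you defer as ``the delicate part.'' Note also that maintaining the candidate set $I$ ``ordered by efficiency'' presupposes a sort by efficiency, which by itself already costs the $O(n\log n)$ you are trying to beat.

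For comparison, the paper's proof is devoted almost entirely to resolving this two-dimensional coupling. It keeps the outer structure a plain binary search over penalties ($O(\log n)$ iterations, not doubly-exponential shrinkage), but finds split items by an iterated-bisection median search over efficiencies whose target values are memoized in a list $E$ across penalty iterations, together with auxiliary arrays $W_i(\eta)$ and $R_i(\eta)$ indexed by both an efficiency target and a penalty bound; these arrays are what make a probe evaluable in $O(1)$ once the relevant target exists. The $\log\log n$ then emerges from a balancing argument: the first $t$ efficiency-target levels cost $O(n\cdot t)$ in total over all iterations (each subset of size $n/2^{t-1}$ is touched $O(1)$ times), the deeper levels cost $O(n/2^{t-1})$ per iteration, and choosing $t=\log\log n$ balances $n\cdot t$ against $\log n \cdot n/2^{t-1}$. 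Your doubly-exponential phase scheme is a genuinely different skeleton and is not inherently doomed, but to complete it you would have to supply a per-phase linear-time batch evaluation, which in effect means rebuilding data structures of the same kind the paper constructs; as written, the proposal asserts the conclusion of that missing lemma rather than proving it.
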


\begin{proof}
We describe an algorithm performing an iterative median search for $\Pi^{LP}$.
In each step of the algorithm we are given a feasible interval for $\Pi^{LP}$
denoted as $[\eta^\l, \eta^u]$.
Then we determine the median $\eta^m$ among all penalty values $\pi_j$ in this interval.
This value $\eta^m$ is considered as a candidate for $\Pi^{LP}$.
%The structure of the solution corresponding to such a penalty candidate was described in Section~\ref{subsec:LP}:Items are sorted in nonincreasing order of efficiency and packed into the knapsack by a greedy strategy. Items $j$ with penalty value $\pi_j \leq \eta^m$ are fully packed, items with $\pi_j > \eta^m$ are truncated and packed with a weight $w_j \frac{\eta^m}{\pi_j}$. This greedy packing continues until the capacity $c$ is reached. A fractional part of the final split item is used to fill the capacity completely.
For the resulting solution it is easy to determine the slope of $z^{LP}(\eta^m)$
and decide accordingly whether $\eta^m$ is set as new upper or lower bound of the search interval.

To find the split item more efficiently in each iteration
we will avoid sorting the items by efficiency but instead employ a linear time
algorithm as described in \cite[ch.~3.1]{KePfPi04}.
It is based on iterated bisection of the item set into sets of items with
higher resp.\ lower efficiency than a current target value $e_i$.
We will keep a sorted list $E$ consisting of all target  values $e_i$ 
(in decreasing order of efficiencies) 
considered so far during the search for a split item over the different penalty candidate values.

Then we introduce the following data structures.
Set $S_i$ contains all items with a penalty value in the current search interval
and efficiency value between $e_i$ and the next larger efficiency target value 
$e_{i-1}$ in $E$. 
Formally, 
$$S_i :=\{ j\in N \mid \eta^\l \leq \pi_j \leq \eta^u, e_{i-1} > e_j \geq e_i\}.$$
For each set $S_i$ we also define the canonical weight sum
$W(S_i) := \sum_{j \in S_i} w_j$.

For the currently considered interval bounds $\eta$ we use the following auxiliary
weight arrays for each $e_i \in E$:
\begin{eqnarray}
W_i(\eta) &:=& 
\sum_{j}w_j \mbox{ over all } j \mbox{ with } e_j > e_i \mbox{ and } \pi_j \leq \eta\label{eq:aux1}\\
R_i(\eta) &:=& 
\sum_{j}w_j \mbox{ over all } j \mbox{ with } e_j > e_i \mbox{ and } \pi_j > \eta \label{eq:aux2}
\end{eqnarray}
These arrays allow an easy evaluation of each target value $e_i \in E$
since the total weight of all items with efficiency higher than $e_i$
and reduced to the current penalty bound $\eta$ is given by
$W_i(\eta)+R_i(\eta)$. 
Taking also the weight of the item with efficiency $e_i$ into account,
one can easily determine in constant time for a give target $e_i$ whether 
the split item is found or a higher resp.\ lower target value for the efficiency
should be considered.

It remains to explain the update of the auxiliary data structures.
Whenever a new efficiency target value $e_k$ is generated for some item $k$
with $e_{i-1} \geq e_k \geq e_i$ and inserted into $E$, 
the current set $S_i$ is partitioned into two sets, namely $S_k$ and (a new set) $S_i$
with the obvious definition.
Also the array entries $W_k(\eta)$ resp.\ $R_k(\eta)$ are generated
from $W_{i-1}(\eta)$ resp.\ $R_{i-1}(\eta)$, while
$W_i(\eta)$ resp.\ $R_i(\eta)$ remain unchanged.
These update operations can be done trivially by considering all items of the
original set $S_i$ explicitly.

In each main iteration, a new penalty search value $\eta^m$ is considered.
Therefore, all sets $S_i$ are bipartitioned into two disjoint sets $S_i^\l$ and $S_i^u$
with $S_i^\l \cup S_i^u =S_i$ and $\pi_j \leq \eta^m$ for $j \in S_i^\l$,
resp.\ $\pi_j > \eta^m$ for $j \in S_i^u$.
After deciding on the new search interval one of these two sets will 
replace $S_i$.
Furthermore, the auxiliary array entries are generated for $\eta^m$.
This can be done by setting
\begin{eqnarray}
W_i(\eta^m) &:=& W_i(\eta^\l) + \sum_{k\leq i} W(S_k^\l),\label{eq:update1}\\
R_i(\eta^m) &:=& R_i(\eta^u) + \sum_{k\leq i \atop j \in S_k^u} w_j \frac{\eta^m}{\pi_j}\,.
\label{eq:update2}
\end{eqnarray}

Concerning the running time, the binary search over all $n$ penality values, 
i.e.\ over all candidates $\eta^m$,
requires $O(\log n)$ iterations.
In each iteration we consider explicitly each item with a penalty value in the current interval $[\eta^\l, \eta^u]$ to determine the median $\eta^m$ and to update
the auxiliary arrays in (\ref{eq:update1}) and (\ref{eq:update2}).
(Note that these array entries also have to be generated if no item in the current
interval is involved.
However, since $|E| \leq \log^2 n$ there can be at most $\log^3 n$ such entries.)
Since the search interval is bisected in each iteration, this sums up to $O(n)$ time.

The effort for finding the split items (and thus the solution of the linear relaxation
of the KP implied by the current penalty bound $\eta$)
depends on the sequence of efficiency target values considered in each iteration.
Here, we will exploit the fact that evaluating a target value can be done in constant time 
employing (\ref{eq:aux1}) and (\ref{eq:aux2}). 
If the target value was already considered in an earlier iteration for a previous penalty
search value, then it is included in $E$ and no additional effort is required.
Note that at most $O(\log^2 n)$ such evaluations take place during the execution of the algorithm.
If the target value, say $e_k$, is considered for the first time,
the corresponding set $S_k$ has to be generated as described above from
a previously existing set $S_i$, which requires considering all items in $S_i$.

In the first iteration (for the median of all penalty values),
searching for the split item starts with $O(n)$ time for the first median (over all efficiencies), then another $O(n/2)$ time for the second target value 
(i.e.\ the median of the upper or lower half of efficiency values), and so on.
Searching for $t$-th target value will require $O(n/2^{t-1})$ time.
This includes also going through all items of the associated set $S_i$
which is of cardinality $n/2^{t-1}$.

The same holds for the second iteration, except for the first target value,
since the median of all efficiencies was for sure considered in the first iteration.
The second target value may or may not have been considered in the first iteration. 
Thus, we have to take the corresponding effort of $O(n/2)$ time into account.

In the third iteration, the effort for the second target value is only relevant,
if it was not considered in the second iteration.
Generalizing this argument over all iterations and taking -- for the time being -- 
only the first $t$ efficiency
target values into account, it turns out that
the effort for deciding the $t$-th target values in total over all $\log n$ iterations 
can be at most $O(n)$.
This results from considering each of the subsets of $n/2^{t-1}$ items at most once.
The total effort of this part is $O(n\cdot t)$.

Continuing the analysis for target values numbered by $t+1, t+2, \ldots, \log n$
we can bound the effort for each iteration over one penalty value
by 
$$n/2^t + n/2^{t+1} + \ldots + 1 \approx n/2^{t-1}.$$
This effort arises for all $\log n$ iterations over all penalty search values.
Thus, we can summarize the total running time associated to the solutions of the linear relaxations as:
$$n \cdot t + \log n \frac{n}{2^{t-1}}$$
Plugging in $t=\log\log n$ yields
$O(n \log\log n + n)$.
\end{proof}
\else{}
More effort is required to compute $PKP^{LP}$ in $O(n\log n)$ time. In fact, an even stronger result holds as $PKP^{LP}$ can be solved in $O(n\log\log n)$ time.
Since we will not apply the solution of $PKP^{LP}$ in our algorithms,
we defer the proof of these running times results to the accompanying technical report \cite{longversion}.
\fi{}

%Although $z^{LP}$ can be computed efficiently, we note however that the difference between $z^{LP}$ and the integer optimal solution value can be arbitrarily large.
%\begin{prop}\label{th:largegap}
%There are instances of PKP where $z^{LP}-z^*$ is arbitrarily large.
%\end{prop}
%\begin{proof}
%Consider the following PKP instance with $n$ items, capacity $c=W$ and the following entries: $w_j=W$, $p_j=M$ and $\pi_j = M -1$ for $j=1,\ldots,n$, with $M$ being an arbitrary large  number.
%Any optimal solution of this problem can pack only one item $j$ with optimal solution value $z^*=p_j - \pi_j=1$.
%The linear relaxation sets $x_j=\frac{1}{n}$ for $j=1,\ldots,n$ and $\Pi = \frac{1}{n}(M-1)$ yielding a solution value % $$ z^{LP} = n\frac{M}{n} - \frac{1}{n}(M-1)=
%\left(1 - \frac{1}{n}\right)M + \frac{1}{n}$$
%Hence, by choosing large values of $M$, the value of $z^{LP}$ can become arbitrarily large compared to the optimal value $z^*=1$.
%\end{proof}

%\subsection{Computing upper bounds}
%\label{PKP:UpperBounds}

\medskip 

We may compute more involved upper bounds on PKP as follows. 
As pointed out above, the optimal solution of PKP is determined by a penalty value $\Pi$ and a subset of items $j$ with $\pi_j \leq \Pi$.
Therefore, we consider sub--problem $PKP_j:=PKP(\pi_j)$ for $j=1,\ldots,n$. 
Recalling (\ref{eq:sort}) each $PKP_j$ is an instance of KP with item set 
$\{j, j+1, \ldots, n\}$ and capacity $c$
where $\pi_j$ is subtracted from the final solution value.

Fixing $\Pi=\pi_j$ for some $j$ is only relevant for the final solution if item $j$ is actually included in the solution.
Hence, as in \cite{CeRi06}, we also consider sub--problem $PKP_{j}^+$,
where item $j$ is packed, a fixed penalty of $\Pi=\pi_j$ is subtracted from the objective function,
and for the reminder of the solution a KP is solved with capacity $c-w_j$ and item set 
$\{j+1, \ldots, n\}$.
%$\{i\in \{1,\ldots,n\}\mid \pi_i \leq \pi_j\}$.

%$PKP_{j}$ and $PKP_{j}^+$ are trivial to solve if the sum of all relevant item weights does not exceed the available capacity.
For both $PKP_j$ resp.\ $PKP_j^+$ we consider the LP-relaxation as upper bound denoted by $PKP_j^{LP}$ resp.\ $PKP_j^{+LP}$.
It is easy to see that 
\begin{align}
&z(PKP_j^+) \leq z(PKP_j) \label{UBPPKP0}\\
&z^* = \max_{j=1,\ldots,n} z(PKP_j) \leq \max_{j=1,\ldots,n} z(PKP_j^{LP}) =: UB_{sub} \label{UBPKP1}\\
&z^* = \max_{j=1,\ldots,n} z(PKP_j^+) \leq \max_{j=1,\ldots,n} z(PKP_j^{+LP}) =: UB_{sub}^+ \label{UBPKP2}
\end{align}
The following dominance relations exist for the upper bounds $UB_{sub}$, $UB_{sub}^+$ and $z^{LP}$. 

\begin{prop}
For any PKP instance, we have that
\begin{equation}
\label{RelBounds}
UB_{sub}^+  \leq UB_{sub} \leq z^{LP}
\end{equation}
and there are instances where the inequalities are strict.
\end{prop}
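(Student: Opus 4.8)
The plan is to treat the two inequalities as routine relaxation (feasible-region containment) arguments and to reserve the real work for exhibiting a single witness instance that separates all three bounds.

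For the left inequality $UB_{sub}^+ \leq UB_{sub}$ I would argue term by term. For each fixed $j$, the problem $PKP_j^+$ is simply $PKP_j$ with the additional requirement that item $j$ be packed; passing to LP-relaxations, $PKP_j^{+LP}$ is $PKP_j^{LP}$ with the extra constraint $x_j = 1$ imposed. Adding a constraint shrinks the feasible polytope, so $z(PKP_j^{+LP}) \leq z(PKP_j^{LP})$ for every $j$, and taking the maximum over $j$ on both sides gives $UB_{sub}^+ \leq UB_{sub}$ by the definitions in (\ref{UBPKP1})--(\ref{UBPKP2}).

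For the right inequality $UB_{sub} \leq z^{LP}$ I would show $z(PKP_j^{LP}) \leq z^{LP}(\pi_j) \leq z^{LP}$ for each $j$. The first step is an embedding: given the optimal fractional solution of $PKP_j^{LP}$ on the item set $\{j, \ldots, n\}$, extend it by setting $x_k = 0$ for $k < j$. Because items are sorted as in (\ref{eq:sort}), every $k \geq j$ satisfies $\pi_k \leq \pi_j$, so the bound (\ref{eq:fracbound}) for $\Pi = \pi_j$ reads $x_k \leq 1$; the extended vector is thus feasible for $PKP^{LP}(\pi_j)$ with the same objective value, giving $z(PKP_j^{LP}) \leq z^{LP}(\pi_j)$. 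The second step is immediate from the definitions: since $\Pi$ is a free variable in the full relaxation, any fixed choice yields a restricted problem, whence $z^{LP} = \max_\Pi z^{LP}(\Pi) \geq z^{LP}(\pi_j)$. Combining and maximizing over $j$ yields $UB_{sub} \leq z^{LP}$.

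The only genuinely delicate part is strictness, which I would establish with one explicit two-item instance. Take item $1$ with $(p_1, w_1, \pi_1) = (8, 2, 3)$, item $2$ with $(p_2, w_2, \pi_2) = (10, 1, 0)$, and capacity $c = 2$; note $w_1 \leq c$ so $PKP_1^+$ is feasible. A short computation gives $UB_{sub}^+ = 10 < UB_{sub} = 11 < z^{LP} = 12.5$, so both inequalities are strict simultaneously. The two gaps have transparent explanations that also indicate what an instance must look like to separate the bounds. The gap $UB_{sub} < z^{LP}$ arises because the optimal penalty of the full relaxation is attained at $\Pi^* = 3/2$, strictly between the two penalties $\pi_1$ and $\pi_2$, so no subproblem $PKP_j$ (whose penalty is forced to some $\pi_j$) can reach it; this is precisely the interior-maximum phenomenon permitted by the piecewise-linear concave shape of Theorem~\ref{th:concave}. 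The gap $UB_{sub}^+ < UB_{sub}$ arises because the winning subproblem $PKP_1$ packs its leading item only fractionally ($x_1 = 1/2$) while filling the remaining capacity with the more efficient item $2$; forcing the leading item fully in, as $PKP_1^+$ does, consumes the whole capacity and evicts item $2$, strictly lowering the value. I expect the main obstacle to be engineering these two effects at once, namely an interior LP optimum together with a fractional leading item in the best subproblem, while keeping $PKP_1^+$ feasible; the instance above is tuned to meet both requirements.
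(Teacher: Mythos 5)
Your proof is correct and follows essentially the same route as the paper: feasible-region containment (adding $x_j=1$) for $UB_{sub}^+ \leq UB_{sub}$, embedding the solution of $PKP_j^{LP}$ into $PKP^{LP}(\pi_j)$ and using $z^{LP} = \max_\Pi z^{LP}(\Pi)$ for $UB_{sub} \leq z^{LP}$, and an explicit two-item instance for strictness. Your witness instance differs numerically from the paper's ($p_1=10, w_1=5, \pi_1=1$; $p_2=6, w_2=4, \pi_2=2$; $c=7$, giving $10<11<12$), but it is verified correct and exhibits the same separating structure, so the two proofs are interchangeable.
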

\begin{proof}
Clearly, the restricted feasible domain of $UB_{sub}^+$ cannot lead to a greater value than $UB_{sub}$ and thus $UB_{sub}^+  \leq UB_{sub}$.
%To prove that $UB_{sub} \leq z^{LP}$,
Let us denote by $j^\prime$ the item yielding $UB_{sub}$, i.e.\ $UB_{sub} = z(PKP_{j^\prime}^{LP})$. Computing $PKP^{LP}(\Pi)$ with $\Pi=\pi_{j^\prime}$ gives a feasible solution for the LP relaxation whose value is less than (or equal to) the optimal value $z^{LP}$ but at least as large as $UB_{sub}$. The latter holds because {\em all} items are involved (and bounded according to (\ref{eq:fracbound})) in the computation while only items $i$ with $\pi_i \leq \pi_{j^\prime}$ are considered for solving $PKP_{j^\prime}^{LP}$. This implies that $UB_{sub} \leq z^{LP}$. 

To show that inequalities in (\ref{RelBounds}) can be strict, consider the following PKP instance with $n=2$ items, capacity $c=7$ and the entries:
$$ p_1= 10, w_1 = 5, \pi_1 = 1;\quad p_2= 6, w_2 = 4, \pi_2 = 2$$
%\begin{table}[h]
%	\centering
	%\scriptsize
%\begin{tabular}{|r|rr|}
 % \hline
%$j$ & 1 & 2\\ \hline
%$p_j$ & 10 & 6\\
%$w_j$ & 5 & 4\\
%$\pi_j$ & 1 & 2\\ \hline
%\end{tabular}
%\end{table}
%This may occur if the item $j$ implying the penalty value of the maximum expression turns out to be a split item of the corresponding LP-solution.
For this instance we have $z^{LP} = 12$, $z(PKP_1^{LP}) = -1 + 10 =9$,
$z(PKP_2^{LP}) = -2 + 10 + \frac 2 4 6= 11$, $z(PKP_1^{+LP}) = -1 +10 =9$, $z(PKP_2^{+LP}) = -2 + 6 + \frac 3 5 10 =10$. Thus, we have:
$$ UB_{sub}^+ = 10 < UB_{sub} = 11 < z^{LP} = 12$$
\end{proof}

Although the three bounds can be computed efficiently and can be expected to be reasonably close to the optimal value in practice, 
we show a negative result on their deviation from the optimum.
%in analogy to Proposition~\ref{th:largegap}.
\begin{prop}
There are instances of PKP where the differences ($UB_{sub}^+ -z^*$), ($UB_{sub} -z^*$) and ($z^{LP} -z^*$) are arbitrarily large.
\end{prop}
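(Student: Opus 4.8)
The plan is to lean entirely on the dominance chain (\ref{RelBounds}) established in the previous proposition. Since $UB_{sub}^+$, $UB_{sub}$ and $z^{LP}$ are all valid upper bounds, we have $z^* \le UB_{sub}^+ \le UB_{sub} \le z^{LP}$, and subtracting $z^*$ throughout yields $UB_{sub}^+ - z^* \le UB_{sub} - z^* \le z^{LP} - z^*$. Hence it suffices to exhibit a parametric family of instances on which the \emph{smallest} of the three gaps, $UB_{sub}^+ - z^*$, grows without bound; the other two differences then inherit the blow-up automatically. This reduces the whole statement to driving a single quantity to infinity.

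The mechanism I would target is the familiar integrality gap of the continuous knapsack relaxation, transplanted into $UB_{sub}^+$. Recall that $PKP_j^{+LP}$ forces the leading item $j$ integrally, subtracts the fixed penalty $\pi_j$, and then solves the \emph{continuous} relaxation of a knapsack over the items following $j$. If one of those following items carries an enormous profit $P$ but a weight just exceeding the residual capacity, the relaxation can collect almost all of $P$ fractionally, whereas no integer solution can use that item at all. Concretely I would use a three-item family: a weightless, profitless leading item with the \emph{largest} penalty (so forcing it in costs only its penalty and consumes no capacity); a ``trap'' item of profit $P$ and weight $c+1$ with an intermediate penalty; and a small item of profit and weight $1$ with penalty $0$, included only to keep the optimum strictly positive and non-empty. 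For the leading item one checks that (\ref{eq:fracbound}) leaves the trap item unscaled (its bound is $1$, not active), so the relaxation packs a fraction $c/(c+1)$ of it, giving $z(PKP_1^{+LP}) = Pc/(c+1) - \pi_1$ and therefore $UB_{sub}^+ \ge Pc/(c+1) - \pi_1$. Meanwhile the trap item never fits in any feasible integer selection, so $z^*$ is governed solely by the small item and stays bounded (equal to $1$). Letting $P \to \infty$ then sends $UB_{sub}^+ - z^*$, and with it all three differences, to infinity.

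I expect the routine arithmetic to be trivial; the genuinely delicate points are structural. First, the penalties must be ordered so that the trap item lies strictly \emph{after} the leading item, making it eligible in $PKP_1^{+LP}$, while the small item keeps the optimum non-empty so that the construction is consistent with the characterization $z^* = \max_{j} z(PKP_j^+)$ in (\ref{UBPKP2}). Second, I would verify carefully that the weight $c+1$ genuinely excludes the trap item from \emph{every} feasible integer solution (so $z^*$ cannot secretly grow with $P$), and that under $\Pi=\pi_1$ the fractional bound (\ref{eq:fracbound}) is indeed inactive on the trap item, so the relaxation really captures the full $c/(c+1)$ fraction rather than a penalty-scaled amount. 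Getting these order-and-capacity interactions exactly right, rather than the limit computation, is the only place where the argument could go wrong.
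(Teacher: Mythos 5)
Your proposal is correct and takes essentially the same approach as the paper: exhibit a family of instances where $UB_{sub}^+ - z^*$ grows without bound --- using the integrality gap created when $PKP_1^{+LP}$ fractionally packs an item too large to fit integrally --- and then invoke the chain (\ref{RelBounds}) to transfer the blow-up to $UB_{sub} - z^*$ and $z^{LP} - z^*$. The paper's instance is merely more compact (two identical items with $p_j = w_j = \frac{M}{2}+1$, $\pi_j = \frac{M}{2}$, $c = M$, giving $z^* = 1$ while $UB_{sub}^+ = \frac{M}{2}$), but the mechanism is identical to your trap-item construction.
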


\begin{proof}
Consider the following instance with $n=2$ items, capacity $c= M$ and the following entries: $p_j=w_j=\frac{M}{2} + 1$ and $\pi_j=\frac{M}{2}$ for $j=1,2$.
In an optimal solution only one item $j$ is packed and, correspondingly, $z^*= p_j - \pi_j= 1$. 
Also, it is easy to see that $UB_{sub}^+ = \frac{M}{2}$, which, in combination with (\ref{RelBounds}), shows the claim.
%Thus, by setting $M$ to large values, both bounds $UB_{sub}$ and $UB_{sub}^+$ can be arbitrarily greater than $z^*$.
\end{proof}

Algorithmically, it is not difficult to see that all values $z(PKP_j^{LP})$ for $j=1,\dots,n$ can be computed in $O(n \log n)$ time. 
Also from a practical point of view, the effort hardly exceeds sorting.
As a preprocessing step an auxiliary array is constructed containing all items sorted in decreasing order of efficiencies $p_j/w_j$.
Then the problems $PKP_j^{LP}$ are considered iteratively for $j=1,\ldots, n$,
i.e., in decreasing order of penalties $\pi_j$.
First, $PKP_1^{LP}$ is solved in linear time and the corresponding split item 
(i.e.\ the first item not fully packed into the knapsack) is identified.
We keep a pointer to this split item in the sorted array of items.
Moving to $PKP_2^{LP}$, we just remove item $1$ from the solution and increase the split item,
or possibly move to a new split item by shifting the pointer towards items with lower efficiency.
All together, after sorting, all values $z(PKP_j^{LP})$ can be determined in linear time
by one pass through the sorted array of items.

\medskip

In \cite{CeRi06}, the authors presented an $O(n^2)$ procedure to compute all values $z(PKP_j^{+LP})$ for $j=1,\dots, n$. In the following, we show that in fact $O(n\log n)$ time is sufficient to perform this task.

\begin{prop}\label{th:nlogn}
All values $z(PKP_j^{+LP})$ for $j=1,\ldots, n$ can be computed in $O(n\log n)$ time.
\end{prop}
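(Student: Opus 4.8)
The plan is to reduce each $z(PKP_j^{+LP})$ to a single fractional knapsack computation and then serve all $n$ of these computations with one auxiliary search structure. First I would observe that, since the items are sorted by decreasing penalty~(\ref{eq:sort}), every item $i$ in the residual set $\{j+1,\ldots,n\}$ satisfies $\pi_i \le \pi_j$, so its fractional bound (\ref{eq:fracbound}) with $\Pi = \pi_j$ equals $\min\{1,\pi_j/\pi_i\} = 1$. Consequently $PKP_j^{+LP}$ is nothing but an ordinary LP-relaxed $0$--$1$ knapsack over the item set $\{j+1,\ldots,n\}$ with capacity $c - w_j$, shifted by the constant $p_j - \pi_j$; if $w_j > c$ the subproblem is infeasible and we set its value to $-\infty$. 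Thus it suffices to compute, for each $j$, the greedy fractional optimum of that knapsack, i.e.\ to locate the split item in the efficiency order restricted to $\{j+1,\ldots,n\}$ and add the profit of the fully packed prefix plus the fractional contribution of the split item.

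Next I would set up the data structure. After an initial sort of all items by decreasing efficiency $p_i/w_i$ in $O(n\log n)$ time, I maintain a balanced search tree (equivalently, two Fenwick trees indexed by efficiency rank) over the currently active items, each node augmented with the subtree sums of weights and of profits. This structure supports two operations in $O(\log n)$ each: inserting (activating) an item, and, given a capacity $C$, walking down from the root to find the split item together with the weight and profit of the prefix of active items of higher efficiency. I would then process the indices in the order $j = n, n-1, \ldots, 1$. Immediately before handling index $j$ the active set is exactly $\{j+1,\ldots,n\}$; I issue one split query with capacity $C = c - w_j$, read off $z(PKP_j^{+LP})$, and only afterwards activate item $j$, so that the invariant ``active set $= \{j,\ldots,n\}$'' holds for the next iteration $j-1$. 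Since there are $n$ iterations, each costing $O(\log n)$ for one insertion and one query, the total work after sorting is $O(n\log n)$, matching the claim and improving on the $O(n^2)$ procedure of \cite{CeRi06}.

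The step I expect to be the genuine obstacle---and the reason the bound is $O(n\log n)$ rather than the $O(n)$ achieved for the bounds $z(PKP_j^{LP})$ earlier---is that here the capacity $c - w_j$ varies with $j$ in addition to the item set. For the plain bounds the capacity was fixed at $c$ and the item set only shrank from the front, so the split item drifted monotonically and a single linear pass with a moving pointer sufficed. By contrast, $w_j$ can change arbitrarily between consecutive indices, so the split item may jump in either direction and by an arbitrary amount; the cumulative pointer movement is no longer $O(n)$, which rules out a naive amortized linear scan. Handling this non-monotone capacity is precisely what forces the logarithmic-time search structure. A secondary point to verify carefully is the boundary behaviour of the query---when $c - w_j$ already accommodates all of $\{j+1,\ldots,n\}$ (no split item, value equal to the full residual profit) and when $w_j > c$ (infeasibility)---but these are routine once the main search operation is in place.
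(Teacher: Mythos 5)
Your proposal is correct and takes essentially the same approach as the paper: both maintain a tree over the efficiency-sorted items augmented with subtree sums of weights and profits, and answer each split-item query for capacity $c - w_j$ by a root-to-leaf descent in $O(\log n)$ time, giving $O(n \log n)$ overall after sorting. The only difference is cosmetic---the paper processes $j = 1, \ldots, n$ and \emph{deletes} item $j$ from an initially full tree (so the tree always holds exactly $\{j+1,\ldots,n\}$), whereas you process $j = n, \ldots, 1$ and \emph{insert} items into an initially empty structure---a mirror-image traversal that changes nothing in the argument or the bound.
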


\begin{proof}
First, the items are sorted in decreasing order of efficiencies.
Based on this sequence we construct 
an auxiliary data structure consisting of a binary tree as follows:
Each item corresponds to a leaf node of the tree.
These are nodes at level $0$.
A parent node is associated with each pair of consecutive items (with a singleton remaining at the end for $n$ odd) thus yielding other $\lceil \frac n 2 \rceil$ nodes in the level $1$ of the tree.
This process is iterated recursively up the tree, which trivially reaches a height of $O(\log n)$.

In each node $v$ of the tree we store as $W(v)$ (resp.\ $P(v)$) the sum of weights (resp.\ profits) of all items corresponding to leaf nodes in the subtree rooted in $v$.
Clearly, such a tree and its additional information can be built in $O(n)$ time.

For any given capacity $c'$ the corresponding split item and also the value of the optimal LP-relaxation can be found in $O(\log n)$ time by starting at the root node and going down towards a leaf node by applying the following rule in every node $v$ with left and right child nodes $v^L$ and $v^R$:
\begin{quote}
If $W(v^L) > c'$ then set $v:=v^L$.\\
Otherwise set $v:=v^R$ and $c':=c'-W(v^L)$.
\end{quote}
The item corresponding to the leaf node reached by this procedure is the split item. 
%of which an appropriate part is included in the solution.
The solution value can be reported by keeping track of the $P(v)$ values during the pass through the tree.

In the main iteration of the algorithm we compute $z(PKP_j^{+LP})$ iteratively for $j=1,\ldots, n$ in decreasing order of penalties $\pi_j$.
First we remove item $j$ permanently from consideration.This means that the leaf node corresponding to $j$ is removed from the tree and
all $O(\log n)$ labels $W(v)$ (resp.\ $P(v)$) on the unique path from this leaf to the root of the tree are updated by subtracting $w_j$ (resp.\ $p_j$).
Then we solve an LP-relaxation with capacity $c':=c-w_j$ and add $p_j-\pi_j$ to the objective function. All together there are $n$ iterations, each of which requiring $O(\log n)$  time to find the solution of the LP-relaxation and $O(\log n)$ time to update the 
labels of the binary tree. 
\end{proof}
Note that we might expect a considerable speed--up of the running time $O(n\log n)$ in a practical implementation since the tree looses vertices in each iteration and path contractions can be performed.

\subsection{A basic dynamic programming algorithm}
\label{PKP:BasicDp}

As recalled in \cite{CeRi06}, a straightforward pseudo--polynomial algorithm for PKP consists of solving $j$ standard knapsack problems $PKP_j^+$ by the classical dynamic programming by weights running in $O(nc)$. The overall complexity is thus $O(n^2c)$. 
However, we can do much better by evaluating all $n$ subproblems in one run.
%The following proposition shows that the complexity can be reduced to $O(\max\{n\log n, nc\})$.
\begin{theorem}
\label{TrivialDP}
PKP can be solved with complexity $O(\max\{nc, n\log n\})$.
\end{theorem}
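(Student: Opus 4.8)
The plan is to avoid the $O(n^2c)$ cost of solving the $n$ knapsack subproblems independently by exploiting the fact that these subproblems are \emph{nested}, so that a single sweep of dynamic programming by weights can evaluate all of them at once.

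First I would recall from \eqref{UBPKP1} that $z^* = \max_{j=1,\ldots,n} z(PKP_j)$, where, by \eqref{eq:sort}, each $PKP_j$ is a standard $0$--$1$ knapsack instance on the item set $\{j, j+1, \ldots, n\}$ with capacity $c$, whose optimal KP value is then decreased by the fixed amount $\pi_j$. The key structural observation is that consecutive subproblems differ by exactly one item: the item set of $PKP_j$ is obtained from that of $PKP_{j+1}$ by adding item $j$. Hence, if the items are considered in order of \emph{increasing} penalty, i.e.\ from $j=n$ down to $j=1$, the $n$ knapsack instances form an increasing chain and can be built incrementally.

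Concretely, I would maintain the usual dynamic-programming-by-weights array $g(d)$, $d=0,\ldots,c$, holding the maximum profit attainable by a subset of the items inserted so far with total weight at most $d$; it is initialized by $g(d)=0$ for all $d$ (the empty item set). Processing $j=n,n-1,\ldots,1$, I insert item $j$ by the standard $0$--$1$ update $g(d):=\max\{g(d),\,g(d-w_j)+p_j\}$ performed for $d=c$ down to $w_j$ (the descending sweep guarantees item $j$ is used at most once). After item $j$ has been inserted, the array $g$ represents exactly the knapsack over $\{j,\ldots,n\}$, so $g(c)$ is the optimal KP value of $PKP_j$ and I record $z(PKP_j)=g(c)-\pi_j$. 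Returning $\max_j z(PKP_j)$ then yields $z^*$ by \eqref{UBPKP1}. Each of the $n$ insertions costs $O(c)$, so the whole sweep costs $O(nc)$; together with the initial $O(n\log n)$ sorting of the items by penalty to establish \eqref{eq:sort}, the total running time is $O(\max\{nc, n\log n\})$.

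The only point requiring care — and the step I would single out as the main obstacle — is the bookkeeping that makes all $n$ read-offs correct in a single pass: one must process the items in the reverse of their penalty order so that the chain of item sets grows monotonically, keep the array in the ``weight $\le d$'' (monotone) form so that the value of $PKP_j$ can be read from $g(c)$ in constant time, and use the descending-$d$ update so that each intermediate table remains a genuine $0$--$1$ knapsack. Given these conventions, the correctness of every intermediate value $g(c)$ is immediate from the standard knapsack recursion, and the stated complexity bound follows.
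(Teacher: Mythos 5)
Your proposal is correct and is essentially the paper's own proof: both sort the items by penalty and run a single $O(nc)$ dynamic program by weights, extracting one candidate value per item so that all $n$ penalty choices are evaluated in one sweep, giving $O(\max\{nc, n\log n\})$ including the sorting. The only cosmetic difference is the read-off step: the paper forces the leading item into the knapsack and evaluates $F_j(c-w_{j+1})+p_{j+1}-\pi_{j+1}$ at the reduced capacity $c-w_{j+1}$, whereas you read $g(c)-\pi_j$ without forcing item $j$, which is equally valid given the identity $z^*=\max_j z(PKP_j)$ stated in (\ref{UBPKP1}).
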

\begin{proof}
It suffices to consider the items sorted by increasing penalty and to run the dynamic program for KP only once. If we denote by $F_j(d)$ the optimal solution value of the sub--problem of KP consisting of items ${1,\dots, j}$ and capacity $d \leq c$, the optimal value of any PKP instance is simply given by
 \begin{equation}
\underset{j=0,\dots,n-1}{\operatorname{\,max}}\; \{F_j(c - w_{j+1}) + p_{j+1} - \pi_{j+1} \}
\end{equation}
That is, we evaluate the choice of item $j+1$ as leading item just by considering the maximum profit 
reachable with the previous items in a knapsack with capacity $c - w_{j+1}$. 
The running time is $O(nc)$ plus the effort for sorting.
\end{proof}

\section{An exact solution approach}
\label{PKP:ExactApp}

\subsection{Overview}
\label{PKP:Rationale}

The DP algorithm of Theorem \ref{TrivialDP}, hereafter denoted as $DP_1$, may be appealing whenever the capacity $c$ is of reasonably limited size. 
However, for KP it is known that more effective than the iterative addition of items
are algorithms based on the core problem.
Thus, our idea is to exploit the core concept for PKP similarly to the framework of the \textit{Minknap} algorithm \cite{Pis97}. % which we briefly recall 
We remark that the presence of penalties compromises in PKP the structure of an optimal solution with respect to a standard KP. 
This difference would typically affect the performance of an approach based on a core problem. 
Further, the presence of penalties limits the effectiveness of the classical dominance rule in KP based on the profits and the weights of the states. 
Anyhow, from a practical perspective it is still beneficial to run a dynamic programming algorithm 
starting from the split solution of KP and not from scratch. 
In addition, by narrowing the interval of penalty values which can possibly lead to an optimal solution, the ``noise'' added by the penalties can be further reduced.

We propose an exact approach involving two main steps. In the first step, we effectively compute an initial feasible solution for the problem and identify the relevant interval of penalties values possibly leading to an optimal solution. In the second step, we run a dynamic programming algorithm with states based on the core concept. In case the first step yields a reduced problem with a reasonably limited input size, we could as well launch the $DP_1$ algorithm. 
In the following subsections we describe the steps of the approach whose pseudo code is presented in Algorithm \ref{algo:EAPKP}. 
\begin{algorithm}
\small
	\caption{Exact solution approach}
	\label{algo:EAPKP}
	\begin{algorithmic}[1]
	\State{\textbf{Input:} PKP instance, parameters $T_1$, $T_2$, $T_3$, $\alpha$.}
     
     \Comment{Step 1}

      \State{$KP_1$ = PKP without penalties; } 
	 % \State{$(\overline{z}, \overline{j}, \overline{f}) \leftarrow$ Solve $KP_1$ by \textit{ModMinknap};}
	  \State{$(\overline{z}, \overline{j}, \overline{f}) \leftarrow$ \textit{ModMinknap($KP_1$)};}
	  \medskip
	 % \State{Discard items $j = 1,\dots,\overline{f}$;}
	  %\State{Fix variables $x_j = 0$ for $j=1,\dots,\overline{f}$;}
		  
 \State{Compute $z(PKP_j^{+LP})$ for $j=\overline{f}+1,\ldots,n$};
 \State{$UB = \underset{j}{\operatorname{max}}\; z(PKP_j^{+LP})$;}
	  
\medskip
	    \IiIf{$UB \leq \overline{z}$}
	     $z^* = \overline{z}$, $j^* = \overline{j}$; \Return $(z^*,j^*)$;		
		 \EndIiIf 	
\medskip

 \State{$k=\underset{j}{\operatorname{arg\,max}}\;z(PKP_j^{+LP})$;}
		 \State{$KP_2$ = $KP_1$ $\cap$ ($x_j =0$ $j=1,\dots,k-1$);} 
	    %\State{$(\hat{z}, \hat{j}, \hat{f}) \leftarrow$ Solve $KP_2$ by \textit{ModMinknap};}	  
         \State{$(\hat{z}, \hat{j}, \hat{f}) \leftarrow$ \textit{ModMinknap($KP_2$)};}	
 \medskip
		  \IiIf{$\hat{z} > \overline{z}$}
		  $\overline{z} = \hat{z}$, $\overline{j} = \hat{j}$;\EndIiIf 
		  %\If{$\hat{z} > \overline{z}$}
		  %\State {$\overline{z} = \hat{z}$, $\overline{j} = \hat{j}$};		
		  %\EndIf 	
 \medskip
 		\State{$l \leftarrow$ Apply (\ref{eq:Compleft});}
		% \State{$l = \underset{}{\operatorname{\,min}}\; \{j :  z(PKP_j^{+LP}) > \overline{z} \}$};  
		\State{$r \leftarrow$ Apply (\ref{eq:Compright});}
		% \State{$r = \underset{}{\operatorname{\,max}}\; \{j : z(PKP_j^{+LP}) > \overline{z} \}$};
		 \State{$\Pi_{max}=\pi_l$};
		 \State{$\Pi_{min}=\pi_r$};
	
		 %\State{Discard items $j=\overline{f}+1,\dots,l - 1$;}
		  %\State{Fix variables $x_j = 0$ for $j=\overline{f}+1,\dots,l - 1$;}
  
  \medskip
  
   \IiIf{$[\Pi_{min}, \Pi_{max}] = \emptyset$}
	     $z^* = \overline{z}$, $j^* = \overline{j}$; \Return $(z^*,j^*)$;		
		 \EndIiIf 	
  
  \medskip
    
     \State{$PKP'$ = PKP $\cap$ ($x_j =0$ $j=1,\dots,l - 1$; $\Pi \geq \Pi_{min}$)}; 
    \State{$n'=n-l+1$;}   %without items $j=1,\dots,l - 1$; $n'=n-l+1$;} 
           
  \medskip
       
        \Comment{Step 2}	
      
	       \If{$n'c \leq T_1$ and $(r-l+ 1) \geq T_2$}
	       % \State {$(z',j') \leftarrow$ Solve $PKP'$ by $DP_1$};
	        \State {$(z',j') \leftarrow$  $DP_1(PKP')$};
	         \IfThenElse{$z' > \overline{z}$} 
		      {$z^* = z'$, $j^* = j'$;}
		      {$z^* = \overline{z}$, $j^* = \overline{j}$};
             \EndIfThenElse
		    \Else 
		   % \State {$(z',j') \leftarrow$ Solve $PKP'$ by $DP_2$};	
		   %  \State {$(z',j') \leftarrow$ $DP_2(PKP',\Pi_{min},\overline{z},\overline{j},T_3,\alpha)$};	
		   	  \State {$(z^*,j^*) \leftarrow$ $DP_2(PKP',\overline{z},\overline{j},T_3,\alpha)$};	
		 \EndIf 	
  \medskip
											
	\State \Return $(z^*,j^*)$;
\end{algorithmic}
\end{algorithm}

\subsection{Step 1: Computing an initial feasible solution and the relevant interval of penalty values}
\label{PKP:Rationale}

The approach takes as input four parameters $T_1$, $T_2$, $T_3$, $\alpha$ and starts by solving the standard knapsack problem $KP_1$ given by disregarding the penalties of the items in PKP (lines 2-3 in Algorithm \ref{algo:EAPKP}).
This problem is solved as follows. Denote the index of the first item in the optimal solution of $KP_1$ (according to the ordering (\ref{eq:sort})) by $\overline{f}$. The corresponding first feasible solution of PKP has objective value $z(KP_1) - \pi_{\overline{f}}$.
Similarly to Proposition 2 in \cite{CeRi06}, the following proposition holds
\begin{prop}\label{th:FromKPprop}
All items $j = 1, \dots, \overline{f}-1$ can be discarded without loss of optimality.  
\end{prop}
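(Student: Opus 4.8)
The plan is to establish a simple dominance argument: I want to show that every feasible PKP solution using at least one of the items $1,\dots,\overline{f}-1$ has objective value no larger than the initial feasible solution of value $z(KP_1)-\pi_{\overline{f}}$, and that this reference solution itself uses none of the items in question. Once this is in place, the optimum is necessarily attained by a solution avoiding $1,\dots,\overline{f}-1$, so these items may be removed without loss of optimality.

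First I would record the two facts about the reference solution that make the argument work. By the definition of $\overline{f}$ as the first (and hence, by~(\ref{eq:sort}), highest-penalty) item appearing in the optimal solution of $KP_1$, that knapsack solution selects items only from $\{\overline{f},\dots,n\}$. Its profit sum equals $z(KP_1)$ and its leading penalty is exactly $\pi_{\overline{f}}$, so as a PKP solution it has value $z(KP_1)-\pi_{\overline{f}}$. The key point here is that this witnessing solution already excludes all of the items we intend to discard.

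Next I would take an arbitrary feasible PKP solution with item set $T$ that contains some index $j'\le\overline{f}-1$, and bound its objective from above. Since $T$ satisfies the capacity constraint~(\ref{eq:weight}), it is feasible for the penalty-free knapsack, so $\sum_{j\in T}p_j\le z(KP_1)$. At the same time, because the items are sorted by~(\ref{eq:sort}) and $j'<\overline{f}$, the penalty charged to $T$ satisfies $\max_{j\in T}\pi_j\ge\pi_{j'}\ge\pi_{\overline{f}}$. Combining the profit upper bound with the penalty lower bound shows that the objective of $T$ is at most $z(KP_1)-\pi_{\overline{f}}$.

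Putting these together, any solution using a discarded item has value at most $z(KP_1)-\pi_{\overline{f}}$, which is exactly the value of the reference solution that uses none of them; hence $z^*$ is attained by a solution avoiding $1,\dots,\overline{f}-1$. The argument is short, so the only points I expect to require care are bookkeeping ones: verifying that $z(KP_1)$ genuinely upper-bounds the profit of every feasible PKP set (every such set is knapsack-feasible), and checking the sorting direction together with possible ties in the penalties — specifically that $j'<\overline{f}$ really forces $\pi_{j'}\ge\pi_{\overline{f}}$ and that the reference knapsack solution excludes all lower-index items, so discarding them cannot eliminate the witnessing optimum.
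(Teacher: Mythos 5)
Your proof is correct and follows essentially the same argument as the paper: any feasible solution containing an item $j' < \overline{f}$ has profit at most $z(KP_1)$ (by optimality of $KP_1$) and penalty at least $\pi_{\overline{f}}$ (by the sorting \pref{eq:sort}), hence value at most $z(KP_1)-\pi_{\overline{f}}$, which is already achieved by the solution of $KP_1$ that avoids those items. You merely spell out the comparison with the reference solution more explicitly than the paper's one-sentence proof does.
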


\begin{proof}
Since $z(KP_1)$ is the optimal solution value, including any item $j = 1, \dots, \overline{f}-1$ leads to a solution with profits less than (or equal to) $z(KP_1)$ but induces a penalty greater than (or equal to) $\pi_{\overline{f}}$. 
%Consequently, every solution of PKP with at least one of these items included cannot improve the first solution value $z(KP_1) - \pi_{\overline{f}}$.
\end{proof}

Thus, if there is more than one optimal solution of $KP_1$, we are interested in the solution yielding the lowest penalty value for PKP, i.e.\ the largest index $\overline{f}$. 
This task is easily accomplished by considering a slight variant of \textit{Minknap}, hereafter denoted as \textit{ModMinknap}, which keeps track of all optimal solutions of $KP_1$ and the corresponding penalty values in PKP. 
In addition, we can compute PKP solutions during the iterations of \textit{ModMinknap} just by tracking the largest penalty associated to each feasible state. 
We then take the overall best solution found for PKP. Denote by $\overline{z}$ its value and by $\overline{j}$ the index of the leading item.

We remark that %\textit{ModMinknap} this variant of \textit{Minknap} 
\textit{ModMinknap}
is only a heuristic algorithm for PKP since it does not explicitly consider the penalties of the items in the iterations. At the same, it may ``stumble'' upon good quality solutions of PKP with just a negligible increase of the computational effort required for solving a KP instance. 

\medskip
Then, we compute $z(PKP_j^{+LP})$ for $j=\overline{f}+1,\ldots,n$. If the maximum of these values is not superior to $\overline{z}$, we have already certified an optimal solution for PKP (lines 4--6 in Algorithm \ref{algo:EAPKP}). 
Otherwise we greedily consider the index $k$ yielding the maximum $z(PKP_j^{+LP})$ %that is 
%\begin{equation}
%\label{eq:Compk}
%k=\underset{j}{\operatorname{arg\,max}}\;z(PKP_j^{+LP})\; \quad j= \overline{f}+1,\dots,n,
%\end{equation}
and solve $KP_1$ without items $j=1,\dots,k-1$. %This problem is denoted as $KP_2$. 
We update the values of $\overline{z}$ and $\overline{j}$ if an improving solution is found (lines 7--10 in Algorithm \ref{algo:EAPKP}).

\medskip
Finally, we compare the values $z(PKP_j^{+LP})$ with the incumbent solution value $\overline{z}$ and narrow the range of possible penalty values that may lead to an optimal solution of PKP. More precisely, we define indices $l$ and $r$
\begin{eqnarray}
l &:=& \underset{}{\operatorname{\,min}}\; \{j :  z(PKP_j^{+LP}) > \overline{z} \};
\label{eq:Compleft}\\
r &:=& \underset{}{\operatorname{\,max}}\; \{j : z(PKP_j^{+LP}) > \overline{z} \}.\label{eq:Compright}
\end{eqnarray}
The relevant interval of penalties is thus $[\Pi_{min}, \Pi_{max}]$, with $\Pi_{min}=\pi_r$ and $\Pi_{max}=\pi_l$ (line 11--14 in Algorithm \ref{algo:EAPKP}). 
If this interval is empty, the current PKP solution is also optimal and the algorithm terminates. Otherwise we get a reduced PKP with only items $j=l,\dots,n$ and the additional constraint on the penalty value $\Pi \geq \Pi_{min}$. Denote this problem by $PKP'$ and its number of items by $n'$, i.e. $n'=n-l+1$ (lines 15--17 in Algorithm \ref{algo:EAPKP}).

\medskip

This first step is expected to be fast since it relies on solving two standard knapsack problems at most and on effectively computing upper bounds for sub--problems $PKP_j^{+}$. We remark that this step is also sufficient to compute an optimal solution for a large number of instances considered in the literature.

\subsection{Step 2: A core--based dynamic programming algorithm}
\label{PKP:DP2}
%To solve the reduced problem $PKP'$, 
In this step we propose a core--based dynamic programming algorithm, hereafter denoted as $DP_2$, that constitutes a revisiting of \textit{Minknap} algorithm.
Notice that, if the size of the reduced problem $PKP'$ is reasonably small and the number of relevant penalties is large, we could otherwise solve $PKP'$ by $DP_1$ and take the best solution between $z(PKP')$ and $\overline{z}$. The choice between the algorithms is made by comparing the quantities $n'c$ and ($r - l + 1$) with the threshold parameters $T_1$ and $T_2$ (lines 18--23 in Algorithm \ref{algo:EAPKP}).

\medskip

$DP_2$ algorithm searches in $PKP'$ for solutions better than $\overline{z}$. 
%In the following we describe the algorithm after some preliminary definitions. 
Given the sorting of the items $j=1,\dots,n'$ by decreasing $\frac{p_j}{w_j}$, we define an expanding core as the interval of items $C_{a,b}=\{a,\dots,b\}$ with items $a$ and $b$ as variable extremes. Correspondingly, we define the set of 0--1 partial vectors enumerated within the core as
\begin{equation}
X_{a,b} = \{ x_j \in\{0,1\} , j \in C_{a,b} \}.
\end{equation}
%The set $X_{a,b}$ is explored by a dynamic programming with states. 
Since in any iteration of the algorithm we will have the following situation

\begin{equation}
\label{eq:PKPCore}
\overbrace{x_1,\dots,x_{a-1}}^{\text{\normalsize $x_j=1$}}, C_{a,b} ,\overbrace{x_{b+1},\dots,x_{n'}}^{\text{\normalsize $x_j=0$}}
\end{equation}
we associate each partial vector $\tilde{x} \in X_{a,b}$ with a state $(\tilde{\nu}, \tilde{\mu},\tilde{\pi}_{core},\tilde{\pi}_{tot})$ where:
\begin{enumerate}
\item $\tilde{\nu} = \sum\limits_{j=1}^{a-1} p_j + \sum\limits_{j=a}^{b}  p_j\tilde{x}_j$;
\item $\tilde{\mu} = \sum\limits_{j=1}^{a-1} w_j + \sum\limits_{j=a}^{b}  w_j\tilde{x}_j$;
\item $\tilde{\pi}_{core} = \underset{j=a,\dots,b}{\operatorname{\,max}}\; \{\pi_j : \tilde{x}_j = 1\}$;
\item $\tilde{\pi}_{tot} = \max \{\tilde{\pi}_{core}, \underset{j=1,\dots,a-1}{\operatorname{\,max}}\;\pi_j\}$.
\end{enumerate}
$\tilde{\nu}$ and  $\tilde{\mu}$ are the profits and weights of a solution with variables in the core and all variables to the left of the core;
$\tilde{\pi}_{core}$ represents the maximum penalty of the items selected in the core while $\tilde{\pi}_{tot}$ is the overall maximum penalty of the state. Each state with $\tilde{\mu} \leq c$ and $\tilde{\pi}_{tot} \geq \Pi_{min}$ represents a feasible solution  of $PKP'$ with value $\tilde{\nu} - \tilde{\pi}_{tot}$. 
We can now sketch the main steps of $DP_2$ in the following pseudo code.
The algorithm takes as input $PKP'$, the current solution ($\overline{z}, \overline{j}$) %, the penalty value $\Pi_{min}$ 
and parameters $T_3$, $\alpha$. 

\begin{algorithm}
\small
	\caption{$DP_2(PKP', \overline{z},\overline{j},T_3,\alpha)$}
	\label{algo:DP2}
	\begin{algorithmic}[1]
	%\State{\textbf{Input:} PKP' instance, $\Pi_{min}$, feasible solution ($\overline{z}$,$\overline{j}$), parameters $T_3$, $\alpha$.}

 \State{Sort items in $PKP'$ by decreasing $\frac{p_j}{w_j}$;}
 \State{$KP'$ = $PKP'$ without penalties;}
 \State{Find the split item $s'$ of $KP'$;}
\medskip 
 \State{$a=b=s'$, $C_{a,b}=\{s'\}$; $X_{a,b} = \{ (0),(1)\}$}; 
\medskip
  % \State{Reduce the set $X_{a,b}$ by upper bounds $UB^i$;}
\State{Reduce set $X_{a,b}$;}% by upper bounds (\ref{UBstates});}
\While{$X_{a,b} \neq \emptyset$ and ($b - a + 1 < n'$)}
    \State{$a \leftarrow a-1$;}
    \If{$u_0^a > \overline{z}$} 
    \If{$\tilde{u}^a > \overline{z}$}
    \State{$X_{a,b}\leftarrow Merge(a,X_{a+1,b},\Pi_{min},T_3,\alpha$);} 
    \State{Update ($\overline{z},\overline{j}$);}
		\State{Reduce set $X_{a,b}$;}% by upper bounds (\ref{UBstates});}
  \EndIf		
	
	\EndIf 	

\medskip
 \State{$b \leftarrow b+1$;},
  \If{$u_1^b > \overline{z}$} 
    \If{$\tilde{u}^b > \overline{z}$}
    \State{$X_{a,b}\leftarrow Merge(b,X_{a,b-1},\Pi_{min},T_3,\alpha$);} 
    \State{Update $(\overline{z},\overline{j}$);}
    \State{Reduce set $X_{a,b}$;}%by upper bounds (\ref{UBstates});}
  \EndIf	
	
	\EndIf 	
  
\medskip

\EndWhile 

\medskip
											
\State \Return $(\overline{z},\overline{j})$;
\end{algorithmic}
\end{algorithm}

We first sort the items of $PKP'$ by decreasing $\frac{p_j}{w_j}$ and find the split item $s'$ of the standard knapsack problem ($KP'$) induced by disregarding the penalties in $PKP'$. 
We then initialize the core with item $s'$ only 
%, i.e.\ $C_{s',s'}=\{s'\}$ 
(lines 1--4 of the pseudo code). 
Then, we enlarge the core as in \textit{Minknap} (while--loop in lines 6--23) by alternately evaluating the removal of an item $a$ from the left (lines 7--14) and the insertion of an item $b$ from the right (lines 15--22). The expansion of the core is performed by a dynamic programming with states through a procedure, denoted as \textit{Merge}, which iteratively yields undominated states in the enlarged set $X_{a,b}=X_{a+1,b} + a$ or  $X_{a,b}=X_{a,b-1} + b$. We may update the current solution ($\overline{z},\overline{j}$) if an improved solution is found while enumerating the core (lines 11 and 19). 

The dynamic programming with states is combined with an upper bound test to reduce the number of states (lines 5, 12 and 20) and two upper bound tests to limit the insertion of the variables in the core (lines 8--9 and 16--17).
The algorithm terminates whenever either the number of states is 0 or all variables have been enumerated in the core.
The ingredients of the algorithm are detailed in the following.

\subsubsection{Dynamic programming with states}
\label{subsubsec:Dominance}

The \textit{Merge} procedure performs the enumeration of the variables in the core by resembling the procedure introduced in \textit{Minknap} \cite{Pis97}, which in turn corresponds to the recursions of the \textit{primal--dual} dynamic programming algorithm in \cite{Pisinger99}. 
The proposed procedure merges, in any iteration, the current set of states $X$ and $X + d$, where $X + d$ is set $X$ with profits, weights and penalties of the states updated according to the removal/insertion of item $d$ from/in the knapsack. 
In the merging operation the states are kept ordered by increasing weights so as to effectively apply a dominance rule for PKP. 

The classical dominance rule in KP considers the weights and profits of the states. 
For PKP, let us define the quantity $\rho = \nu - \max \{\pi_{core}, \Pi_{min}\}$ which represents the difference between the profit of a state and the minimum penalty that the state must have for yielding an optimal solution. This penalty corresponds to the maximum between $\Pi_{min}$ and $\pi_{core}$ since, due to the enumeration of the core, for any state $\pi_{core}$ constitutes a minimum penalty value in all states originating from it while $\Pi_{min}$ is the minimum penalty required in any solution with a value greater than $\overline{z}$. We introduce the following dominance rule for two generic states $i$ and $j$.
\begin{prop} 
\label{PKPdominance}
Given states $i$ and $j$ and their quantities fulfilling
\begin{equation}
\label{EqPKPdom}
\mu^i \leq \mu^j, \qquad \nu^i \geq \nu^j, \qquad \rho^{i} \geq \rho^{j}.
\end{equation}
Then state $j$ is said to be dominated by state $i$ and can be discarded in the search for an optimal solution of PKP.
\end{prop}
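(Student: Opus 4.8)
The plan is to verify the dominance directly from its intended meaning: I would show that any way of completing state $j$ into a full solution of $PKP'$ can be carried out identically on state $i$, yielding a solution that is feasible whenever $j$'s is and whose objective is at least as large. The key structural observation is that $i$ and $j$ belong to the \emph{same} set $X_{a,b}$, so they share the identical left part $x_1=\dots=x_{a-1}=1$ and right part $x_{b+1}=\dots=x_{n'}=0$; consequently the set of still‑undecided items, and thus the menu of possible completions, is the same for both. A completion amounts to choosing, identically for $i$ and $j$, which left items to keep and which right items to add. Writing $\delta$ and $\omega$ for the resulting common change in profit and weight, the completed solution built from state $i$ has profit $\nu^i+\delta$ and weight $\mu^i+\omega$. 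The easy parts follow immediately: $\mu^i\le\mu^j$ gives $\mu^i+\omega\le\mu^j+\omega$, so capacity‑feasibility of $j$'s completion implies that of $i$'s, and $\nu^i\ge\nu^j$ makes the profit at least as large. All the difficulty sits in the penalty term.

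The next step is to explain why the rule is stated with $\pi_{core}$ rather than $\pi_{tot}$. The kept left items and the added right items contribute a common penalty floor $\Gamma$ (the maximum of their penalties), identical for $i$ and $j$ since the completion is identical; since a completion may \emph{remove} left items, the floor can drop, so $\pi_{tot}$ is not stable under completions, whereas $\pi_{core}$ is frozen once the core vector is committed. Hence the true penalty of $i$'s completion is $\max\{\pi_{core}^i,\Gamma\}$. Setting $m^i:=\max\{\pi_{core}^i,\Pi_{min}\}$ (so $\rho^i=\nu^i-m^i$) and using the effective penalty $\max\{m^i,\Gamma\}$, I would compare the two effective values, in which the common $\delta$ cancels, and establish by a short case distinction on the position of $\Gamma$ relative to $m^i$ and $m^j$ that
\[
\bigl(\nu^i-\max\{m^i,\Gamma\}\bigr)-\bigl(\nu^j-\max\{m^j,\Gamma\}\bigr)\ \ge\ 0 ,
\]
each of the four cases invoking exactly one of $\nu^i\ge\nu^j$ or $\rho^i\ge\rho^j$ (and, in the two mixed cases, the case hypothesis on $\Gamma$). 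This shows $i$ dominates $j$ with respect to the effective objective.

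The main obstacle, and the part I would treat most carefully, is reconciling the effective objective with genuine $PKP'$‑feasibility, which additionally demands a true penalty of at least $\Pi_{min}$. If $j$'s completion is feasible and improving ($\text{value}>\overline{z}$), then its true penalty is $\ge\Pi_{min}$, so its effective value equals its true value, and by the displayed inequality $i$'s effective value is at least as large. When $\max\{\pi_{core}^i,\Gamma\}\ge\Pi_{min}$, state $i$'s completion is feasible and its true value coincides with its effective value, so it matches or beats $j$. The delicate case is $\max\{\pi_{core}^i,\Gamma\}<\Pi_{min}$, where $i$'s completion is formally infeasible for $PKP'$; I would dispose of it using the range narrowing of Step~1. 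By the definition of $r$ and $\Pi_{min}=\pi_r$, every feasible solution of the original PKP whose maximum penalty is below $\Pi_{min}$ has value at most $\overline{z}$. But in this case $i$'s completion has true penalty below $\Pi_{min}$, hence effective penalty exactly $\Pi_{min}$, so its true value strictly exceeds its effective value, which in turn is $\ge$ the effective value of $j$, i.e.\ $>\overline{z}$ — contradicting the bound $\le\overline{z}$. Thus this case cannot arise when $j$ is feasible and improving, so discarding $j$ never destroys an improving solution. Combining the weight argument, the effective‑value inequality, and this feasibility reconciliation proves that $j$ is safely dominated by $i$.
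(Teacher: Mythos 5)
Your proof is correct and follows essentially the same route as the paper's own (two-sentence) argument: the weight and profit conditions give standard KP dominance, while the condition $\rho^i \geq \rho^j$ ensures that every completion of state $i$ — necessarily identical to the corresponding completion of state $j$, since both states share the same set $X_{a,b}$ — yields a solution at least as good among the candidates that can still be optimal (penalty at least $\Pi_{min}$). Your write-up is considerably more rigorous than the paper's sketch; in particular, the four-case comparison of $\max\{m^i,\Gamma\}$ against $\max\{m^j,\Gamma\}$ and the elimination, via the Step-1 range narrowing, of the case where state $i$'s completion has true penalty below $\Pi_{min}$ are details the paper leaves implicit.
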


\begin{proof}
The first two conditions represent the dominance of state $i$ in the standard KP.
The condition $\rho^{i} \geq \rho^{j}$
implies that all successive states deriving from state $i$ and possibly optimal for $PKP$ (i.e.\ with a penalty greater 
than $\Pi_{min}$) would have a no worse solution value than those deriving from state $j$.
\end{proof}

We remark that, given the presence of penalties, the ordering of states by increasing weights does not imply the ordering of the profits as in \textit{Minknap}. To better detect situations of dominance, we apply the rule in Proposition \ref{PKPdominance} by comparing each state with a number of states (with a lower weight) given by parameter $\alpha$. 

\medskip

Whenever only the condition involving the penalties prevents the fathoming of state $j$, we may combine the dominance rule with an upper bound on state $j$ depending on a penalty value $\pi > \max \{\pi_{core}^j, \Pi_{min}\}$.\\
This upper bound, denoted by $UB(\pi)^j$, is computed as follows. We first solve the linear relaxation of the KP induced by packing the items selected in the core for state $j$ and by disregarding the items outside the core with a higher penalty than $\pi$. From the optimal solution value  of this problem we then subtract the maximum value between $\pi_{core}^j$ and $\Pi_{min}$. The following proposition holds

\begin{prop} 
\label{PKPdominance2}
Given two states $i$ and $j$ and the quantities
\begin{equation}
\label{PKPdomin}
\mu^i \leq \mu^j, \qquad \nu^i \geq \nu^j, \qquad \rho^{i} < \rho^{j}, 
\end{equation}
consider the maximum penalty $\hat{\pi}$ which would not induce the dominance of state $i$ according to (\ref{EqPKPdom}), i.e.\ $\hat{\pi} = \underset{}{\operatorname{\,max}}\; \{\pi : \nu^j - \pi > \rho^i \}$. If $UB(\hat{\pi})^j \leq \overline{z}$, then state $j$ can be discarded.
\end{prop}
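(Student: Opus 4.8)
The plan is to show that the hypothesis $UB(\hat\pi)^j\le\overline{z}$ guarantees that state $j$ cannot generate any feasible solution of $PKP'$ that both improves the incumbent $\overline{z}$ and is not already matched by a descendant of state $i$; hence $j$ may be discarded. Throughout I would fix an arbitrary \emph{common} extension of the two states, i.e.\ the same future decisions on the items outside the current core $C_{a,b}$ (deselection of left items, selection of right items), and let $\pi_{ext}$ denote the largest penalty among the items that this extension keeps selected outside the core, with $\Delta p,\Delta w$ the induced profit and weight increments. Since $i$ and $j$ share the same core range and core decisions are frozen once taken, $\pi_{ext}$, $\Delta p$ and $\Delta w$ are identical for both, while the final penalties differ only through the frozen core penalties, $\pi_{final}^i=\max\{\pi_{core}^i,\pi_{ext}\}$ and $\pi_{final}^j=\max\{\pi_{core}^j,\pi_{ext}\}$. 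Writing $V:=\nu+\Delta p-\pi_{final}$ for the objective value of an extended state and restricting to \emph{relevant} extensions (feasible for $PKP'$, with value exceeding $\overline{z}$) lets me assume $\pi_{final}^j\ge\Pi_{min}$; note also that $i$'s extension has weight $\mu^i+\Delta w\le\mu^j+\Delta w\le c$, hence is feasible whenever $j$'s is.

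The heart of the argument is a dichotomy on $\pi_{final}^j$, sharpening Proposition~\ref{PKPdominance}. First I would treat $\pi_{final}^j>\hat\pi$ and claim $V^i\ge V^j$. From $V^i-V^j=(\nu^i-\nu^j)+(\pi_{final}^j-\pi_{final}^i)$ and $\nu^i\ge\nu^j$, whenever $\pi_{core}^i\le\pi_{final}^j$ one gets $\pi_{final}^i\le\pi_{final}^j$ and the inequality is immediate. The only delicate situation is $\pi_{core}^i>\pi_{final}^j$, which forces $\pi_{final}^i=\pi_{core}^i$ and, via $\pi_{core}^i>\pi_{final}^j\ge\Pi_{min}$, gives $\max\{\pi_{core}^i,\Pi_{min}\}=\pi_{core}^i$, so $\rho^i=\nu^i-\pi_{core}^i$. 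By definition of $\hat\pi=\max\{\pi:\nu^j-\pi>\rho^i\}$, any $\pi_{final}^j>\hat\pi$ satisfies $\nu^j-\pi_{final}^j\le\rho^i=\nu^i-\pi_{final}^i$, which rearranges to exactly $V^i\ge V^j$. Thus every improving descendant of $j$ with penalty above $\hat\pi$ is dominated by the corresponding descendant of $i$ and need not be generated from $j$.

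It remains to treat $\pi_{final}^j\le\hat\pi$. Such an extension keeps selected, outside the core, only items of penalty at most $\hat\pi$, and it satisfies $\pi_{core}^j\le\hat\pi$; its value equals its profit minus a penalty that is at least $\max\{\pi_{core}^j,\Pi_{min}\}$. Hence its value is at most the LP-relaxation value of the knapsack that packs the core items selected by $j$ and admits, among the items outside the core, only those of penalty at most $\hat\pi$, reduced by $\max\{\pi_{core}^j,\Pi_{min}\}$; by definition this quantity is exactly $UB(\hat\pi)^j$. Under the hypothesis $UB(\hat\pi)^j\le\overline{z}$, no such descendant can beat the incumbent. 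Combining the two cases, state $j$ produces no feasible solution that simultaneously improves $\overline{z}$ and is not already reachable at least as well through $i$, proving that $j$ may be discarded.

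I expect the main obstacle to be the bookkeeping of the penalty as a \emph{maximum} rather than an additive quantity: one must justify that the contribution $\pi_{ext}$ of the shared extension is genuinely common to both states (relying on $i$ and $j$ living over the same core and on frozen core decisions), and then track the three‑way interaction of $\pi_{core}$, $\pi_{ext}$ and $\Pi_{min}$ through all sub‑cases of the two maxima, including the use of relevance to replace $\max\{\pi_{core}^i,\Pi_{min}\}$ by $\pi_{core}^i$. A secondary point requiring care is the strict‑versus‑nonstrict boundary in $\hat\pi=\max\{\pi:\nu^j-\pi>\rho^i\}$, so that the split at $\hat\pi$ cleanly partitions the descendants into those dominated by $i$ and those covered by the bound $UB(\hat\pi)^j$.
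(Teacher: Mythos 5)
Your proof is correct and follows essentially the same route as the paper's: a dichotomy on the final penalty of any descendant of state $j$, where values at most $\hat{\pi}$ are killed by the hypothesis $UB(\hat{\pi})^j \leq \overline{z}$ and values above $\hat{\pi}$ are matched by the corresponding descendant of state $i$ via the definition of $\hat{\pi}$. The paper states this dichotomy in three sentences; you merely fill in the bookkeeping (common extensions, the sub-cases of the maxima, and feasibility via $\mu^i \leq \mu^j$) that the paper leaves implicit.
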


\begin{proof}
We analyze the solution values deriving from state $j$ when the overall maximum penalty is upper bounded by a quantity $\pi'$.  
For any $\pi' \leq \hat{\pi}$, since $UB(\hat{\pi})^j \leq \overline{z}$ we can discard state $j$ because all states deriving from state $j$ cannot reach a solution values greater than $\overline{z}$.
Likewise, we can as well discard state $j$ if $\pi' > \hat{\pi}$ since this condition would induce a dominance of state $i$.
\end{proof}
Computing $UB(\pi)$ has complexity $O(n)$ and would be time--consuming if the number of states involved is sufficiently large. Thus, we calculate this bound only if the number of states exceeds the threshold value $T_3$. 

\subsubsection{Reduction of the states}
\label{subsubsec:Red}

To further reduce the set of states, we also perform an upper bound test in constant time for each state. %
In any iteration, we compute the following upper bound for a state $i$ associated with $X_{a,b}$:
\begin{align}
&UB^i= \left\{ \begin{array}{ll} 
                             \rho^i + (c -\mu^i)\frac{p_{b+1}}{w_{b+1}} & \mbox{ if } \mu^i \leq c \\[2ex]
                             \rho^i + (c -\mu^i)\frac{p_{a-1}}{w_{a-1}} &\mbox{ if }  \mu^i > c       
                                 \end{array} \right.
\label{UBstates}
\end{align}
and discard state $i$ if $UB^i \leq \overline{z}$. These upper bounds are computed by replacing the integrality constraint on $x_{a-1}$ and $x_{b+1}$ with $x_{a-1} \geq 0$ and $x_{b+1} \geq 0$ and by disregarding the penalty values of the variables outside the core.  

\subsubsection{Upper bound tests on the variables outside the core}
\label{subsubsec:Red}
Since the insertion of variables in the core may be computationally expensive, we perform two upper bound tests whenever an item $j$ is candidate to be included in the core. \\
We first compute similar bounds to the ones proposed in \cite{DeHa80} for KP. Let us denote by $u_0^j$ an upper bound on $PKP'$ without item $j$. 
Also, let us denote by $u_1^j$ the upper bound when item $j$ is packed. 
The following bounds are computed in constant time for each item $j$:
\begin{eqnarray}
u_0^j  &=& p' - p_j - \Pi_{min} + (c - w' + w_j)\frac{p_{s'}}{w_{s'}} \quad j= 1,\dots,s'-1
\\
u_1^j &=& p' + p_j  - \max \{\pi_j, \Pi_{min}\} + (c - w' - w_j)\frac{p_{s'}}{w_{s'}} 
\ j= s'+1,\dots,n
\end{eqnarray}
Here $w'=\sum\limits_{j=1}^{s'-1} w_j $ and $p'=\sum\limits_{j=1}^{s'-1} p_j $  represent the weight and the profit of the split solution of $KP'$. If $u_0^j$ (resp.\ $u_1^j$) $\leq \overline{z}$, we can fix variable $x_j = 1$ (resp.\ $x_j = 0$).  

\medskip

In cascade, we may perform a second test by computing a stronger upper bound in linear time with the number of states.
As in \textit{Minknap}, we evaluate the impacts of removing (inserting) item $j$ with $j < s'$ ($j > s'$) in all states in the current set $X$, namely we derive states $i \in X + j$ and compute upper bounds (\ref{UBstates}) on these states. 
A valid upper bound for item $j$, denoted as $\tilde{u}^j$, is constituted by the maximum of these bounds. As pointed out in \cite{Pis97}, $\tilde{u}^j$ can be seen as a generalization of the enumerative bound in \cite{MarToth88}.
If $\tilde{u}^j \leq \overline{z}$, then variable $x_j$ is fixed to the value taken in the split solution. 

\medskip
After this second step, the optimal solution value $z^*$ and the optimal leading item  $j^*$ are returned. The optimal solution set of items can be determined by solving the standard knapsack problem $PKP_{j^*}^+$.

\section{Computational results}
\label{sec:CompResPKP}

All tests were performed on an Intel i7 CPU @ 2.4 GHz with 8 GB of RAM. The code was implemented in the C++ programming language.
We generated the instances according to the generation scheme proposed in \cite{CeRi06}. We considered two types of weights: $a1$ and $a2$.  In the former, the weights are randomly distributed in $[1,R]$,  with $R$ being an arbitrary parameter. In the latter, the weights are equal to $\frac{R}{2}$ + $v$, with $v$ uniformly distributed in $[0,\frac{R}{2} ]$. Basically, small weights are not considered in $a2$. 

We generated 8 classes of penalties ($\pi1, \dots, \pi8$) and 7 classes of profits ($p1, \dots, p7$) according to different correlations of penalties/profits with the weights, as illustrated in Table~\ref{tab:CorrPKP}. 
The first 6 correlations correspond to classical correlations in KP instances. In class $\pi7$ penalties $\pi_j$ are equal to $R - w_j + 1$ (constant perimeter correlation) while in class $\pi8$ we set $\pi_j = \frac{R}{w_j}$ (constant area correlation). In class $p7$ we set $p_j = \pi_jw_j$. Finally, three different values of the ratio $\tau$ between the knapsack capacity and the sum of the weights of the items are considered: 0.5, 0.1 and 0.01.

\begin{table}[h]
	\centering
	\small
	 %\scriptsize
\begin{tabular}{rll}
  \hline
$\pi$ type & Correlation & $p$ type\\ \hline
$\pi1$ & No correlation & $p1$\\
$\pi2$ & Weak correlation & $p2$\\
$\pi3$ & Strong correlation & $p3$\\
$\pi4$ & Inverse strong correlation & $p4$\\
$\pi5$ & Almost strong correlation & $p5$\\
$\pi6$ & Subset-sum correlation & $p6$\\
$\pi7$ & Constant perimeter &	 \\
$\pi8$ & Constant area &	 \\
		& Profit = area & $p7$\\

 \hline
\end{tabular}
		\caption{Correlation types from \cite{CeRi06}.} 
		\label{tab:CorrPKP}
\end{table}

We first generated instances with 1000 items and $R = 1000$. Within each category, 
five instances were tested for a total of 1680 instances. Similarly, we generated 1680 instances with 10000 items and $R = 10000$. We compared the solutions reached by the proposed exact approach, the algorithm in \cite{CeRi06} and CPLEX 12.5 running on model (PKP). 
After some preliminary test runs, we chose the following parameter values for our approach: $\alpha = 15$, $T_1 = 5*10^9$, $T_2 = \frac{n}{10}$, $T_3 = 3*10^6$. The parameters of the ILP solver were set to their default values. 

The results are summarized in Tables \ref{tab:CPUTimePKP2} and \ref{tab:CPUTimePKP3} in terms of average, maximum CPU time and number of optima obtained within a time limit of 100 seconds. The average CPU times consider also the cases where the time limit is reached. 
The results are aggregated by profit classes and weight types. Each entry in the tables reports the results over 120 instances.
\ifdefined\LONG
Detailed results for all correlations and capacity ratios are given at the end of this section.
\else{}
%In this paper we restrict ourselves to reporting summary results.
Detailed results for all correlations and capacity ratios are available in \cite{longversion}.
\fi{}

\medskip
%\begin{table}[H]
\begin{table}[ht]
	\centering
	\scriptsize
	\scalebox{0.8}{
\begin{tabular}{|cc|*{3}{c|}*{3}{c|}*{3}{c|}}
  \hline %\hline
    $n = 1000$  &   & \multicolumn{3}{|c|}{CPLEX 12.5} & \multicolumn{3}{|c|}{Algorithm in \cite{CeRi06}} & \multicolumn{3}{|c|}{Proposed exact approach}  \\ \hline
  Profit & Weight &   Average &  Max  &&   Average &  Max  &&   Average &  Max  &\\
    class  &  type &   time (s)& time (s) & \#Opt&   time (s)& time (s) & \#Opt&   time (s)& time (s) & \#Opt\\										
									\hline		 %\hline
$p1$ & $a1$ & 0.18  & 0.25   & 120 & 0.00 & 0.01 & 120 & 0.00 & 0.00 & 120 \\
         & $a2$ & 0.19  & 0.48   & 120 & 0.00 & 0.01 & 120 & 0.00 & 0.03 & 120 \\ \hline
$p2$ & $a1$ & 0.39  & 1.69   & 120 & 0.00 & 0.01 & 120 & 0.00 & 0.10 & 120 \\
         & $a2$ & 1.12  & 6.32   & 120 & 0.00 & 0.01 & 120 & 0.01 & 0.27 & 120 \\ \hline
$p3$ & $a1$ & 3.90  & 100.00 & 117 & 0.04 & 0.89 & 120 & 0.01 & 0.40 & 120 \\
         & $a2$ & 6.83  & 100.00 & 117 & 0.50 & 8.02 & 120 & 0.02 & 0.29 & 120 \\ \hline
$p4$ & $a1$ & 59.56 & 100.00 & 57  & 0.10 & 1.38 & 120 & 0.02 & 0.18 & 120 \\
         & $a2$ & 66.97 & 100.00 & 46  & 0.28 & 7.75 & 120 & 0.04 & 0.26 & 120 \\ \hline
$p5$ & $a1$ & 4.13  & 100.00 & 117 & 0.03 & 0.89 & 120 & 0.02 & 0.20 & 120 \\
         & $a2$ & 14.97 & 100.00 & 114 & 0.41 & 4.50 & 120 & 0.07 & 1.40 & 120 \\ \hline
$p6$ & $a1$ & 2.67  & 90.38  & 120 & 0.00 & 0.06 & 120 & 0.00 & 0.03 & 120 \\
         & $a2$ & 2.97  & 13.57  & 120 & 0.01 & 0.15 & 120 & 0.01 & 0.08 & 120 \\ \hline
$p7$ & $a1$ & 27.58 & 100.00 & 89  & 0.00 & 0.01 & 120 & 0.00 & 0.02 & 120 \\
         & $a2$ & 38.24 & 100.00 & 75  & 0.01 & 0.06 & 120 & 0.01 & 0.07 & 120\\ \hline						
\end{tabular}}
		\caption{Summary results for instances with 1000 items and different correlations between profits and weights: time (s) and number of optima over 120 instances.} 
		\label{tab:CPUTimePKP2}
\end{table}

%\begin{table}[H]
\begin{table}[ht]
	\centering
	\scriptsize
	\scalebox{0.8}{
\begin{tabular}{|cc|*{3}{c|}*{3}{c|}*{3}{c|}}
  \hline %\hline
    $n = 10000$  &   & \multicolumn{3}{|c|}{CPLEX 12.5} & \multicolumn{3}{|c|}{Algorithm in \cite{CeRi06}} & \multicolumn{3}{|c|}{Proposed exact approach}  \\ \hline
  Profit & Weight &   Average &  Max  &&   Average &  Max  &&   Average &  Max  &\\
    class  &  type &   time (s)& time (s) & \#Opt&   time (s)& time (s) & \#Opt&   time (s)& time (s) & \#Opt\\										
									\hline		 %\hline
$p1$ & $a1$ & 0.95  & 2.68   & 120 & 0.01  & 0.02   & 120 & 0.01 & 0.03  & 120 \\
         & $a2$ & 4.19  & 100.00 & 117 & 0.01  & 0.04   & 120 & 0.01 & 0.03  & 120 \\ \hline
$p2$ & $a1$ & 5.41  & 33.84  & 120 & 0.01  & 0.02   & 120 & 0.02 & 0.13  & 120 \\
         & $a2$ & 12.43 & 100.00 & 114 & 0.01  & 0.07   & 120 & 0.04 & 0.73  & 120 \\ \hline
$p3$ & $a1$ & 44.61 & 100.00 & 74  & 25.59 & 100.00 & 96  & 2.59 & 58.46 & 120 \\
         & $a2$ & 74.13 & 100.00 & 46  & 48.26 & 100.00 & 74  & 5.53 & 29.00 & 120 \\ \hline
$p4$ & $a1$& 91.60 & 100.00 & 11  & 17.68 & 100.00 & 106 & 2.81 & 18.88 & 120 \\
         & $a2$ & 94.69 & 100.00 & 7   & 26.34 & 100.00 & 106 & 7.82 & 80.02 & 120 \\ \hline
$p5$ & $a1$ & 25.45 & 100.00 & 95  & 10.70 & 100.00 & 113 & 2.66 & 70.71 & 120 \\
         & $a2$ & 65.99 & 100.00 & 48  & 23.74 & 100.00 & 101 & 7.04 & 58.60 & 120 \\ \hline
$p6$ & $a1$ & 83.08 & 100.00 & 58  & 0.67  & 40.17  & 120 & 0.22 & 5.38  & 120 \\
         & $a2$ & 81.05 & 100.00 & 44  & 3.62  & 100.00 & 119 & 1.95 & 16.65 & 120 \\ \hline
$p7$ & $a1$ & 51.00 & 100.00 & 63  & 0.17  & 0.94   & 120 & 0.42 & 2.50  & 120 \\
         & $a2$ & 40.12 & 100.00 & 75  & 0.54  & 3.94   & 120 & 1.41 & 11.46 & 120\\ \hline						
\end{tabular}}
		\caption{Summary results for instances with 10000 items and different correlations between profits and weights: time (s) and number of optima over 120 instances.} 
		\label{tab:CPUTimePKP3}
\end{table}

From Tables \ref{tab:CPUTimePKP2} and \ref{tab:CPUTimePKP3} we see that, for the instances with 1000 items, both the proposed exact approach and the algorithm in \cite{CeRi06} outperform CPLEX 12.5 which does not reach all the optima within the time limit. Although the performances of the algorithms are similar, we note that our approach generally performs slightly better and requires 1.4 seconds at most for solving to optimality all instances.

In the largest instances with 10000 items, our algorithm strongly outperforms both CPLEX 12.5 and the algorithm in \cite{CeRi06}. Our approach is capable of reaching all optima with limited CPU time (80 seconds at most for an instance in class $p4$) while the solver and the competing algorithm run out of time for several large instances. The largest differences in computational times involve instances in classes $p3$, $p4$ and $p5$. 

The most challenging instances for our algorithm turned out to be the ones without small weights ($a2$). In general, the absence of small weights might increase the computational effort required for solving even standard KP instances (as pointed out, e.g., in \cite{CeRi06}) and this is presumably the reason of the increase in CPU times of our algorithm as well. 

In many instances, the first main step relying on solving standard KPs is sufficient to certificate an optimal solution for PKP. Indeed, this constitutes a remarkable strength of our procedures. In Tables \ref{tab:CPUTimePKP4} and \ref{tab:CPUTimePKP5} we report the percentage of the optimal solutions already computed by the first step of the procedure for the instances with 1000 and 10000 items respectively.  Averaged computational times (\% of the total CPU time) of the two steps of our approach are also reported. Finally, we report the average and maximum values (in thousands) of the maximum number of states reached by $DP_2$ algorithm in each instance. We point out that $DP_1$ algorithm is called a limited number of times with respect to $DP_2$ (5\% of the cases) and mainly in the smallest instances with 1000 items.  

%\begin{table}[H]
\begin{table}[ht]
	\centering
	\scriptsize
\begin{tabular}{|cc|c|*{2}{c|}*{2}{c|}}
  \hline %\hline
    Proposed exact approach &  & \textit{Step 1}&\multicolumn{2}{|c|}{\textit{Step 1} and} & \multicolumn{2}{|c|}{Max number of states} \\ 
    	($n = 1000$)	  &  & only & \multicolumn{2}{|c|}{ \textit{Step 2}} & \multicolumn{2}{|c|}{in $DP_2$} \\ \hline
    		%	  &  &  & \textit{Step 1}& \textit{Step 2} & \multicolumn{2}{|c|}{} \\ \hline
  Profit & Weight &  \#Opt & Time & Time & Average &  Max  \\
    class  &  type &(\%)& (\%)  & (\%)&  (x$10^3$) & (x$10^3$) \\										
									\hline		 %\hline
$p1$ & $a1$ & 72.5 & 54.0 & 46.0 & 0.1  & 0.2  \\
          & $a2$ & 60.0 & 56.8 & 43.2 & 0.1  & 0.5  \\ \hline
$p2$ & $a1$ & 43.3 & 50.6 & 49.4 & 0.8  & 16.0 \\
         & $a2$ & 49.2 & 49.4 & 50.6 & 1.6  & 20.5 \\ \hline
$p3$ & $a1$ & 69.2 & 58.1 & 41.9 & 2.8  & 56.2 \\
         & $a2$ & 52.5 & 78.0 & 22.0 & 1.8  & 6.8  \\ \hline
$p4$ & $a1$ &48.3 & 78.0 & 22.0 & 2.1  & 19.6 \\
         & $a2$ &57.5 & 73.7 & 26.3 & 2.6  & 14.9 \\ \hline
$p5$ & $a1$ &22.5 & 43.0 & 57.0 & 5.1  & 36.1 \\
         & $a2$ &24.2 & 43.5 & 56.5 & 13.4 & 58.9 \\ \hline
$p6$ & $a1$ & 82.5 & 41.5 & 58.5 & 4.9  & 25.2 \\
         & $a2$ &27.5 & 36.0 & 64.0 & 7.6  & 47.8 \\ \hline
$p7$ & $a1$ & 59.2 & 51.0 & 49.0 & 0.9  & 3.2  \\
         & $a2$ & 58.3 & 48.5 & 51.5 & 2.3  & 7.5	\\ \hline			
\end{tabular}
		\caption{Numerical insights of the proposed exact approach for instances with 1000 items.} 
		\label{tab:CPUTimePKP4}
\end{table}

%\begin{table}[H]
\begin{table}[ht]
	\centering
	\scriptsize
\begin{tabular}{|cc|c|*{2}{c|}*{2}{c|}}
  \hline %\hline
   Proposed exact approach &  & \textit{Step 1}&\multicolumn{2}{|c|}{\textit{Step 1} and} & \multicolumn{2}{|c|}{Max number of states} \\ 
    	($n = 10000$)	  &  & only & \multicolumn{2}{|c|}{ \textit{Step 2}} & \multicolumn{2}{|c|}{in $DP_2$} \\ \hline
    		%	  &  &  & \textit{Step 1}& \textit{Step 2} & \multicolumn{2}{|c|}{} \\ \hline
  Profit & Weight &  \#Opt & Time & Time & Average &  Max  \\
    class  &  type &(\%)& (\%)  & (\%)&  (x$10^3$) & (x$10^3$) \\										
									\hline		 %\hline
$p1$ & $a1$ & 85.0 & 79.1 & 20.9 & 0.7   & 6.3    \\
         & $a2$ & 69.2 & 62.3 & 37.7 & 1.5   & 5.5    \\ \hline
$p2$ & $a1$ & 50.0 & 68.3 & 31.7 & 3.4   & 66.2   \\
         & $a2$ & 52.5 & 48.6 & 51.4 & 19.5  & 166.1  \\ \hline
$p3$ & $a1$ &77.5 & 58.0 & 42.0 & 146.3 & 1115.8 \\
         & $a2$ & 56.7 & 90.1 & 9.9  & 45.8  & 247.4  \\ \hline
$p4$ & $a1$ &73.3 & 74.5 & 25.5 & 119.9 & 713.0  \\
         & $a2$ & 79.2 & 83.8 & 16.2 & 139.8 & 885.8  \\ \hline
$p5$ & $a1$ &27.5 & 39.3 & 60.7 & 164.5 & 1244.3 \\
         & $a2$ & 25.0 & 39.2 & 60.8 & 444.4 & 3088.3 \\ \hline
$p6$ & $a1$ &83.3 & 32.0 & 68.0 & 187.5 & 700.5  \\
         & $a2$ & 33.3 & 26.8 & 73.2 & 321.6 & 1292.2 \\ \hline
$p7$ & $a1$ &55.0 & 60.3 & 39.7 & 81.1  & 493.3  \\
         & $a2$ & 63.3 & 60.2 & 39.8 & 145.6 & 764.9\\ \hline		
\end{tabular}
		\caption{Numerical insights of the proposed exact approach for instances with 10000 items.} 
		\label{tab:CPUTimePKP5}
\end{table}

The results in the tables illustrate the effectiveness of the first step in solving PKP instances.  
Usually more than 50\% of the instances %(with the exception of class $p5$) 
are solved to optimality within this step. When both steps are involved, the computational effort is on average equally distributed. We note however an increase of the percentages of the second step in classes $p5$ and $p6$. The number of states is in general reasonably limited allowing our algorithm to effectively solve all instances considered. The largest values of the number of states (with a maximum of about 3 millions) are reached in the instances with 10000 items.

\ifdefined\LONG
In the following we also list detailed results for all correlations and capacity ratios.

\begin{table}[H]
	\centering
	\scriptsize
	\scalebox{0.85}{
\begin{tabular}{|ccc|*{3}{c|}*{3}{c|}*{3}{c|}}
  \hline 
    $n = 1000$  &  & & \multicolumn{3}{|c|}{CPLEX 12.5} & \multicolumn{3}{|c|}{Algorithm in \cite{CeRi06}} & \multicolumn{3}{|c|}{Proposed exact approach}  \\ \hline
  Weight & &Profit &   Average &  Max  &&   Average &  Max  &&   Average &  Max  &\\
  type  & $\tau$ &class &   time (s)& time (s) & \#Opt&   time (s)& time (s) & \#Opt&   time (s)& time (s) & \#Opt\\										
									\hline		
$a1$ & 0.5& $p1$ & 0.19  & 0.25   & 40 & 0.00 & 0.00 & 40 & 0.00 & 0.00 & 40 \\
   &      & $p2$ & 0.32  & 0.92   & 40 & 0.00 & 0.01 & 40 & 0.00 & 0.10 & 40 \\
   &      & $p3$ & 10.59 & 100.00 & 37 & 0.10 & 0.89 & 40 & 0.03 & 0.40 & 40 \\
   &      & $p4$ & 62.27 & 100.00 & 16 & 0.14 & 1.34 & 40 & 0.02 & 0.05 & 40 \\
   &      & $p5$ & 11.08 & 100.00 & 37 & 0.09 & 0.89 & 40 & 0.04 & 0.20 & 40 \\
   &      & $p6$ & 2.29  & 18.94  & 40 & 0.00 & 0.00 & 40 & 0.00 & 0.02 & 40 \\
   &      & $p7$ & 36.28 & 100.00 & 26 & 0.00 & 0.01 & 40 & 0.00 & 0.01 & 40 \\
   & 0.1  & $p1$ & 0.20  & 0.25   & 40 & 0.00 & 0.01 & 40 & 0.00 & 0.00 & 40 \\
   &      & $p2$ & 0.49  & 1.13   & 40 & 0.00 & 0.00 & 40 & 0.00 & 0.05 & 40 \\
   &      & $p3$ & 0.54  & 1.44   & 40 & 0.02 & 0.11 & 40 & 0.01 & 0.04 & 40 \\
   &      & $p4$ & 74.13 & 100.00 & 11 & 0.12 & 1.38 & 40 & 0.02 & 0.18 & 40 \\
   &      & $p5$ & 0.93  & 10.24  & 40 & 0.01 & 0.06 & 40 & 0.01 & 0.05 & 40 \\
   &      & $p6$ & 4.23  & 90.38  & 40 & 0.00 & 0.01 & 40 & 0.00 & 0.03 & 40 \\
   &      & $p7$ & 28.61 & 100.00 & 29 & 0.00 & 0.01 & 40 & 0.01 & 0.02 & 40 \\
   & 0.01 & $p1$ & 0.14  & 0.18   & 40 & 0.00 & 0.01 & 40 & 0.00 & 0.00 & 40 \\
   &      & $p2$ & 0.35  & 1.69   & 40 & 0.00 & 0.00 & 40 & 0.00 & 0.01 & 40 \\
   &      & $p3$ & 0.58  & 8.34   & 40 & 0.00 & 0.03 & 40 & 0.00 & 0.01 & 40 \\
   &      & $p4$ & 42.27 & 100.00 & 30 & 0.04 & 0.20 & 40 & 0.01 & 0.03 & 40 \\
   &      & $p5$ & 0.38  & 3.65   & 40 & 0.00 & 0.01 & 40 & 0.00 & 0.01 & 40 \\
   &      & $p6$ & 1.48  & 4.05   & 40 & 0.00 & 0.06 & 40 & 0.00 & 0.01 & 40 \\
   &      & $p7$ & 17.87 & 100.00 & 34 & 0.00 & 0.01 & 40 & 0.00 & 0.02 & 40 \\
$a2$ & 0.5& $p1$ & 0.18  & 0.32   & 40 & 0.00 & 0.00 & 40 & 0.00 & 0.00 & 40 \\
   &      & $p2$ & 0.47  & 3.19   & 40 & 0.00 & 0.01 & 40 & 0.00 & 0.00 & 40 \\
   &      & $p3$ & 10.66 & 100.00 & 37 & 0.89 & 8.02 & 40 & 0.03 & 0.11 & 40 \\
   &      & $p4$ & 53.29 & 100.00 & 19 & 0.50 & 7.75 & 40 & 0.04 & 0.11 & 40 \\
   &      & $p5$ & 23.32 & 100.00 & 36 & 0.65 & 4.50 & 40 & 0.12 & 1.40 & 40 \\
   &      & $p6$ & 4.65  & 13.57  & 40 & 0.00 & 0.01 & 40 & 0.02 & 0.08 & 40 \\
   &      & $p7$ & 17.76 & 100.00 & 33 & 0.00 & 0.01 & 40 & 0.01 & 0.03 & 40 \\
   & 0.1  & $p1$ & 0.22  & 0.33   & 40 & 0.00 & 0.00 & 40 & 0.00 & 0.00 & 40 \\
   &      & $p2$ & 1.52  & 6.32   & 40 & 0.00 & 0.01 & 40 & 0.01 & 0.27 & 40 \\
   &      & $p3$ & 3.40  & 41.67  & 40 & 0.41 & 2.17 & 40 & 0.03 & 0.29 & 40 \\
   &      & $p4$ & 87.51 & 100.00 & 6  & 0.31 & 7.30 & 40 & 0.04 & 0.26 & 40 \\
   &      & $p5$ & 10.75 & 100.00 & 38 & 0.38 & 1.63 & 40 & 0.07 & 0.38 & 40 \\
   &      & $p6$ & 1.91  & 5.30   & 40 & 0.00 & 0.08 & 40 & 0.01 & 0.04 & 40 \\
   &      & $p7$ & 35.62 & 100.00 & 26 & 0.01 & 0.03 & 40 & 0.01 & 0.07 & 40 \\
   & 0.01 & $p1$ & 0.18  & 0.48   & 40 & 0.00 & 0.01 & 40 & 0.00 & 0.03 & 40 \\
   &      & $p2$ & 1.38  & 4.26   & 40 & 0.00 & 0.01 & 40 & 0.00 & 0.01 & 40 \\
   &      & $p3$ & 6.43  & 18.53  & 40 & 0.20 & 1.17 & 40 & 0.02 & 0.03 & 40 \\
   &      & $p4$ & 60.12 & 100.00 & 21 & 0.04 & 0.14 & 40 & 0.03 & 0.05 & 40 \\
   &      & $p5$ & 10.84 & 60.54  & 40 & 0.19 & 0.84 & 40 & 0.02 & 0.04 & 40 \\
   &      & $p6$ & 2.36  & 6.48   & 40 & 0.03 & 0.15 & 40 & 0.01 & 0.07 & 40 \\
   &      & $p7$ & 61.34 & 100.00 & 16 & 0.01 & 0.06 & 40 & 0.02 & 0.06 & 40 \\ \hline		
\end{tabular}}
		\caption{Computational results for instances with 1000 items and different correlations between profits and weights: time (s) and number of optima over 40 instances.} 
		\label{tab:CPUTimePKP6}
\end{table}

\begin{table}[H]
	\centering
	\scriptsize
		\scalebox{0.85}{
\begin{tabular}{|ccc|*{3}{c|}*{3}{c|}*{3}{c|}}
  \hline 
    $n = 10000$  &  & & \multicolumn{3}{|c|}{CPLEX 12.5} & \multicolumn{3}{|c|}{Algorithm in \cite{CeRi06}} & \multicolumn{3}{|c|}{Proposed exact approach}  \\ \hline
  Weight & &Profit &   Average &  Max  &&   Average &  Max  &&   Average &  Max  &\\
  type  & $\tau$ &class &   time (s)& time (s) & \#Opt&   time (s)& time (s) & \#Opt&   time (s)& time (s) & \#Opt\\										
									\hline		
$a1$ & 0.5& $p1$ & 1.00   & 2.28   & 40 & 0.01  & 0.02   & 40 & 0.02  & 0.03  & 40 \\
   &      & $p2$ & 3.73   & 9.40   & 40 & 0.01  & 0.02   & 40 & 0.02  & 0.03  & 40 \\
   &      & $p3$ & 62.62  & 100.00 & 16 & 45.29 & 100.00 & 23 & 6.65  & 58.46 & 40 \\
   &      & $p4$ & 82.03  & 100.00 & 8  & 17.43 & 100.00 & 34 & 2.79  & 10.73 & 40 \\
   &      & $p5$ & 53.44  & 100.00 & 20 & 24.99 & 100.00 & 33 & 7.16  & 70.71 & 40 \\
   &      & $p6$ & 90.65  & 100.00 & 12 & 0.01  & 0.06   & 40 & 0.23  & 2.56  & 40 \\
   &      & $p7$ & 21.31  & 100.00 & 33 & 0.09  & 0.33   & 40 & 0.15  & 0.51  & 40 \\
   & 0.1  & $p1$ & 1.12   & 2.68   & 40 & 0.01  & 0.01   & 40 & 0.01  & 0.02  & 40 \\
   &      & $p2$ & 11.05  & 33.84  & 40 & 0.01  & 0.02   & 40 & 0.02  & 0.13  & 40 \\
   &      & $p3$ & 55.29  & 100.00 & 21 & 26.62 & 100.00 & 33 & 1.02  & 8.73  & 40 \\
   &      & $p4$ & 95.24  & 100.00 & 2  & 16.64 & 100.00 & 37 & 3.54  & 18.88 & 40 \\
   &      & $p5$ & 20.70  & 100.00 & 35 & 6.24  & 78.90  & 40 & 0.76  & 6.13  & 40 \\
   &      & $p6$ & 77.25  & 100.00 & 24 & 0.09  & 1.97   & 40 & 0.33  & 5.38  & 40 \\
   &      & $p7$ & 64.61  & 100.00 & 15 & 0.27  & 0.94   & 40 & 0.57  & 2.26  & 40 \\
   & 0.01 & $p1$ & 0.73   & 1.28   & 40 & 0.00  & 0.01   & 40 & 0.00  & 0.01  & 40 \\
   &      & $p2$ & 1.45   & 2.74   & 40 & 0.00  & 0.02   & 40 & 0.01  & 0.02  & 40 \\
   &      & $p3$ & 15.92  & 100.00 & 37 & 4.86  & 59.95  & 40 & 0.12  & 1.00  & 40 \\
   &      & $p4$ & 97.53  & 100.00 & 1  & 18.97 & 100.00 & 35 & 2.10  & 14.20 & 40 \\
   &      & $p5$ & 2.22   & 6.56   & 40 & 0.86  & 7.79   & 40 & 0.07  & 0.56  & 40 \\
   &      & $p6$ & 81.35  & 100.00 & 22 & 1.91  & 40.17  & 40 & 0.11  & 1.69  & 40 \\
   &      & $p7$ & 67.09  & 100.00 & 15 & 0.14  & 0.56   & 40 & 0.54  & 2.50  & 40 \\
$a2$ & 0.5& $p1$ & 1.93   & 11.22  & 40 & 0.01  & 0.02   & 40 & 0.02  & 0.02  & 40 \\
   &      & $p2$ & 3.86   & 11.72  & 40 & 0.01  & 0.02   & 40 & 0.02  & 0.03  & 40 \\
   &      & $p3$ & 70.77  & 100.00 & 18 & 53.23 & 100.00 & 23 & 8.09  & 25.55 & 40 \\
   &      & $p4$ & 86.05  & 100.00 & 6  & 24.86 & 100.00 & 35 & 8.72  & 26.06 & 40 \\
   &      & $p5$ & 60.26  & 100.00 & 19 & 24.30 & 100.00 & 32 & 9.93  & 58.60 & 40 \\
   &      & $p6$ & 86.29  & 100.00 & 7  & 0.08  & 0.95   & 40 & 2.62  & 16.65 & 40 \\
   &      & $p7$ & 14.30  & 100.00 & 36 & 0.31  & 2.23   & 40 & 0.48  & 3.34  & 40 \\
   & 0.1  & $p1$ & 1.27   & 2.19   & 40 & 0.01  & 0.02   & 40 & 0.01  & 0.02  & 40 \\
   &      & $p2$ & 10.99  & 29.04  & 40 & 0.01  & 0.02   & 40 & 0.04  & 0.66  & 40 \\
   &      & $p3$ & 66.31  & 100.00 & 19 & 50.95 & 100.00 & 24 & 4.33  & 14.29 & 40 \\
   &      & $p4$ & 98.01  & 100.00 & 1  & 34.25 & 100.00 & 35 & 11.28 & 80.02 & 40 \\
   &      & $p5$ & 54.25  & 100.00 & 21 & 27.97 & 100.00 & 35 & 5.97  & 40.59 & 40 \\
   &      & $p6$ & 82.95  & 100.00 & 13 & 0.29  & 1.47   & 40 & 2.28  & 7.87  & 40 \\
   &      & $p7$ & 51.33  & 100.00 & 20 & 0.77  & 3.94   & 40 & 1.83  & 11.46 & 40 \\
   & 0.01 & $p1$ & 9.37   & 100.00 & 37 & 0.01  & 0.04   & 40 & 0.01  & 0.03  & 40 \\
   &      & $p2$ & 22.45  & 100.00 & 34 & 0.01  & 0.07   & 40 & 0.06  & 0.73  & 40 \\
   &      & $p3$ & 85.32  & 100.00 & 9  & 40.59 & 100.00 & 27 & 4.18  & 29.00 & 40 \\
   &      & $p4$ & 100.00 & 100.00 & 0  & 19.92 & 100.00 & 36 & 3.46  & 15.36 & 40 \\
   &      & $p5$ & 83.47  & 100.00 & 8  & 18.96 & 100.00 & 34 & 5.22  & 20.51 & 40 \\
   &      & $p6$ & 73.90  & 100.00 & 24 & 10.50 & 100.00 & 39 & 0.96  & 3.90  & 40 \\
   &      & $p7$ & 54.74  & 100.00 & 19 & 0.55  & 3.47   & 40 & 1.92  & 8.74  & 40 \\ \hline	
\end{tabular}}
		\caption{Computational results for instances with 10000 items and different correlations between profits and weights: time (s) and number of optima over 40 instances.} 
		\label{tab:CPUTimePKP7}
\end{table}

\begin{table}[H]
	\centering
	\scriptsize
	\scalebox{0.85}{
\begin{tabular}{|ccc|*{3}{c|}*{3}{c|}*{3}{c|}}
  \hline 
    $n = 1000$  &  & & \multicolumn{3}{|c|}{CPLEX 12.5} & \multicolumn{3}{|c|}{Algorithm in \cite{CeRi06}} & \multicolumn{3}{|c|}{Proposed exact approach}  \\ \hline
  Weight & &Penalty &   Average &  Max  &&   Average &  Max  &&   Average &  Max  &\\
  type  & $\tau$ &class &   time (s)& time (s) & \#Opt&   time (s)& time (s) & \#Opt&   time (s)& time (s) & \#Opt\\										
									\hline		 
$a1$ & 0.5  & $\pi1$ & 6.65  & 100.00 & 33 & 0.01 & 0.05 & 35 & 0.01 & 0.03 & 35 \\
   &      & $\pi2$ & 16.33 & 100.00 & 30 & 0.02 & 0.14 & 35 & 0.01 & 0.12 & 35 \\
   &      & $\pi3$ & 19.56 & 100.00 & 29 & 0.08 & 0.89 & 35 & 0.02 & 0.14 & 35 \\
   &      & $\pi4$ & 21.98 & 100.00 & 28 & 0.03 & 0.39 & 35 & 0.01 & 0.06 & 35 \\
   &      & $\pi5$ & 22.95 & 100.00 & 28 & 0.05 & 0.38 & 35 & 0.01 & 0.10 & 35 \\
   &      & $\pi6$ & 25.89 & 100.00 & 27 & 0.04 & 0.89 & 35 & 0.01 & 0.05 & 35 \\
   &      & $\pi7$ & 15.49 & 100.00 & 30 & 0.09 & 1.34 & 35 & 0.01 & 0.03 & 35 \\
   &      & $\pi8$ & 11.73 & 100.00 & 31 & 0.06 & 0.63 & 35 & 0.03 & 0.40 & 35 \\
   & 0.1  & $\pi1$ & 7.53  & 100.00 & 33 & 0.00 & 0.05 & 35 & 0.01 & 0.18 & 35 \\
   &      & $\pi2$ & 15.01 & 100.00 & 30 & 0.01 & 0.08 & 35 & 0.01 & 0.08 & 35 \\
   &      & $\pi3$ & 29.18 & 100.00 & 25 & 0.01 & 0.08 & 35 & 0.01 & 0.03 & 35 \\
   &      & $\pi4$ & 17.96 & 100.00 & 29 & 0.01 & 0.06 & 35 & 0.01 & 0.03 & 35 \\
   &      & $\pi5$ & 18.44 & 100.00 & 29 & 0.01 & 0.09 & 35 & 0.01 & 0.03 & 35 \\
   &      & $\pi6$ & 12.68 & 100.00 & 31 & 0.01 & 0.11 & 35 & 0.00 & 0.04 & 35 \\
   &      & $\pi7$ & 17.93 & 100.00 & 30 & 0.09 & 1.38 & 35 & 0.01 & 0.05 & 35 \\
   &      & $\pi8$ & 5.98  & 100.00 & 33 & 0.03 & 0.58 & 35 & 0.01 & 0.05 & 35 \\
   & 0.01 & $\pi1$ & 0.70  & 3.83   & 35 & 0.00 & 0.00 & 35 & 0.00 & 0.01 & 35 \\
   &      & $\pi2$ & 7.19  & 100.00 & 33 & 0.00 & 0.03 & 35 & 0.01 & 0.03 & 35 \\
   &      & $\pi3$ & 7.52  & 100.00 & 34 & 0.00 & 0.05 & 35 & 0.00 & 0.03 & 35 \\
   &      & $\pi4$ & 20.09 & 100.00 & 30 & 0.01 & 0.05 & 35 & 0.01 & 0.03 & 35 \\
   &      & $\pi5$ & 16.57 & 100.00 & 32 & 0.01 & 0.05 & 35 & 0.00 & 0.03 & 35 \\
   &      & $\pi6$ & 3.37  & 30.16  & 35 & 0.00 & 0.01 & 35 & 0.00 & 0.01 & 35 \\
   &      & $\pi7$ & 15.94 & 100.00 & 30 & 0.02 & 0.20 & 35 & 0.00 & 0.01 & 35 \\
   &      & $\pi8$ & 0.70  & 8.34   & 35 & 0.01 & 0.09 & 35 & 0.00 & 0.01 & 35 \\
$a2$ & 0.5& $\pi1$ & 10.77 & 100.00 & 32 & 0.02 & 0.14 & 35 & 0.02 & 0.17 & 35 \\
   &      & $\pi2$ & 16.60 & 100.00 & 31 & 0.08 & 1.07 & 35 & 0.07 & 1.40 & 35 \\
   &      & $\pi3$ & 22.18 & 100.00 & 29 & 0.42 & 4.57 & 35 & 0.04 & 0.32 & 35 \\
   &      & $\pi4$& 19.38 & 100.00 & 30 & 0.51 & 7.81 & 35 & 0.03 & 0.21 & 35 \\
   &      & $\pi5$ & 21.57 & 100.00 & 29 & 0.28 & 3.00 & 35 & 0.03 & 0.28 & 35 \\
   &      & $\pi6$ & 11.32 & 100.00 & 32 & 0.47 & 8.02 & 35 & 0.03 & 0.23 & 35 \\
   &      & $\pi7$ & 9.68  & 100.00 & 32 & 0.49 & 7.75 & 35 & 0.02 & 0.11 & 35 \\
   &      & $\pi8$ & 14.59 & 100.00 & 30 & 0.05 & 0.80 & 35 & 0.01 & 0.09 & 35 \\
   & 0.1  & $\pi1$ & 14.78 & 100.00 & 32 & 0.02 & 0.17 & 35 & 0.05 & 0.38 & 35 \\
   &      & $\pi2$ & 21.28 & 100.00 & 28 & 0.04 & 0.30 & 35 & 0.02 & 0.22 & 35 \\
   &      & $\pi3$ & 18.82 & 100.00 & 29 & 0.22 & 2.16 & 35 & 0.03 & 0.27 & 35 \\
   &      & $\pi4$ & 30.67 & 100.00 & 25 & 0.28 & 2.17 & 35 & 0.03 & 0.13 & 35 \\
   &      & $\pi5$ & 25.16 & 100.00 & 27 & 0.22 & 1.98 & 35 & 0.02 & 0.12 & 35 \\
   &      & $\pi6$ & 28.61 & 100.00 & 26 & 0.19 & 1.63 & 35 & 0.02 & 0.09 & 35 \\
   &      & $\pi7$ & 10.08 & 100.00 & 32 & 0.26 & 7.30 & 35 & 0.02 & 0.11 & 35 \\
   &      & $\pi8$ & 11.68 & 100.00 & 31 & 0.04 & 0.44 & 35 & 0.01 & 0.07 & 35 \\
   & 0.01 & $\pi1$ & 1.60  & 12.74  & 35 & 0.00 & 0.01 & 35 & 0.01 & 0.03 & 35 \\
   &      & $\pi2$ & 28.11 & 100.00 & 28 & 0.01 & 0.14 & 35 & 0.01 & 0.04 & 35 \\
   &      & $\pi3$ & 26.29 & 100.00 & 28 & 0.15 & 0.73 & 35 & 0.02 & 0.05 & 35 \\
   &      & $\pi4$ & 19.68 & 100.00 & 31 & 0.11 & 1.17 & 35 & 0.02 & 0.05 & 35 \\
   &      & $\pi5$ & 33.97 & 100.00 & 25 & 0.10 & 0.67 & 35 & 0.02 & 0.05 & 35 \\
   &      & $\pi6$ & 24.26 & 100.00 & 29 & 0.12 & 0.84 & 35 & 0.02 & 0.06 & 35 \\
   &      & $\pi7$ & 17.34 & 100.00 & 30 & 0.02 & 0.10 & 35 & 0.01 & 0.03 & 35 \\
   &      & $\pi8$ & 11.77 & 100.00 & 31 & 0.02 & 0.16 & 35 & 0.00 & 0.02 & 35 \\ \hline
\end{tabular}}
		\caption{Computational results for instances with 1000 items and different correlations between penalties and weights: time (s) and number of optima over 35 instances.} 
		\label{tab:CPUTimePKP8}
\end{table}

\begin{table}[H]
	\centering
	\scriptsize
		\scalebox{0.85}{
\begin{tabular}{|ccc|*{3}{c|}*{3}{c|}*{3}{c|}}
  \hline 
    $n = 10000$  &  & & \multicolumn{3}{|c|}{CPLEX 12.5} & \multicolumn{3}{|c|}{Algorithm in \cite{CeRi06}} & \multicolumn{3}{|c|}{Proposed exact approach}  \\ \hline
  Weight & &Penalty &   Average &  Max  &&   Average &  Max  &&   Average &  Max  &\\
  type  & $\tau$ &class &   time (s)& time (s) & \#Opt&   time (s)& time (s) & \#Opt&   time (s)& time (s) & \#Opt\\										
									\hline		 
$a1$ & 0.5  & $\pi1$ & 26.96 & 100.00 & 29 & 1.07  & 8.74   & 35 & 1.04 & 10.54 & 35 \\
   &      & $\pi2$ & 51.68 & 100.00 & 18 & 4.93  & 100.00 & 34 & 0.89 & 8.74  & 35 \\
   &      & $\pi3$ & 45.50 & 100.00 & 20 & 8.83  & 100.00 & 33 & 1.06 & 5.78  & 35 \\
   &      & $\pi4$ & 52.72 & 100.00 & 17 & 24.28 & 100.00 & 28 & 2.22 & 17.67 & 35 \\
   &      & $\pi5$ & 54.65 & 100.00 & 17 & 18.01 & 100.00 & 29 & 1.66 & 10.09 & 35 \\
   &      & $\pi6$ & 55.55 & 100.00 & 16 & 23.99 & 100.00 & 27 & 2.46 & 20.65 & 35 \\
   &      & $\pi7$ & 38.21 & 100.00 & 25 & 11.72 & 100.00 & 31 & 1.08 & 10.73 & 35 \\
   &      & $\pi8$ & 34.49 & 100.00 & 27 & 7.54  & 100.00 & 33 & 9.05 & 70.71 & 35 \\
   & 0.1  & $\pi1$ & 27.19 & 100.00 & 28 & 0.88  & 18.70  & 35 & 0.72 & 18.88 & 35 \\
   &      & $\pi2$ & 48.67 & 100.00 & 22 & 2.17  & 20.73  & 35 & 0.70 & 6.27  & 35 \\
   &      & $\pi3$ & 54.65 & 100.00 & 18 & 7.52  & 100.00 & 33 & 1.30 & 18.19 & 35 \\
   &      & $\pi4$ & 59.08 & 100.00 & 17 & 15.20 & 100.00 & 32 & 0.91 & 7.16  & 35 \\
   &      & $\pi5$ & 57.49 & 100.00 & 19 & 8.46  & 100.00 & 34 & 1.00 & 8.53  & 35 \\
   &      & $\pi6$ & 51.93 & 100.00 & 21 & 7.55  & 100.00 & 34 & 0.39 & 2.26  & 35 \\
   &      & $\pi7$ & 48.77 & 100.00 & 22 & 11.15 & 100.00 & 32 & 0.85 & 8.59  & 35 \\
   &      & $\pi8$ & 23.95 & 100.00 & 30 & 4.07  & 73.67  & 35 & 1.27 & 8.73  & 35 \\
   & 0.01 & $\pi1$ & 21.36 & 100.00 & 30 & 0.17  & 3.00   & 35 & 0.31 & 6.57  & 35 \\
   &      & $\pi2$ & 37.31 & 100.00 & 26 & 6.20  & 73.05  & 35 & 0.58 & 5.66  & 35 \\
   &      & $\pi3$ & 45.55 & 100.00 & 21 & 2.26  & 33.15  & 35 & 0.28 & 3.69  & 35 \\
   &      & $\pi4$ & 45.75 & 100.00 & 21 & 2.36  & 59.95  & 35 & 0.33 & 2.50  & 35 \\
   &      & $\pi5$ & 49.01 & 100.00 & 21 & 1.25  & 10.15  & 35 & 0.46 & 5.62  & 35 \\
   &      & $\pi6$ & 44.44 & 100.00 & 23 & 1.28  & 26.15  & 35 & 0.58 & 14.20 & 35 \\
   &      & $\pi7$ & 38.94 & 100.00 & 22 & 13.75 & 100.00 & 31 & 0.47 & 3.36  & 35 \\
   &      & $\pi8$ & 21.96 & 100.00 & 31 & 3.30  & 100.00 & 34 & 0.35 & 6.39  & 35 \\
$a2$ & 0.5& $\pi1$ & 50.50 & 100.00 & 18 & 4.70  & 44.55  & 35 & 2.91 & 17.61 & 35 \\
   &      & $\pi2$ & 51.30 & 100.00 & 18 & 16.08 & 100.00 & 30 & 4.55 & 53.64 & 35 \\
   &      & $\pi3$ & 46.81 & 100.00 & 21 & 19.17 & 100.00 & 29 & 5.14 & 42.97 & 35 \\
   &      & $\pi4$ & 55.72 & 100.00 & 19 & 14.48 & 100.00 & 32 & 4.80 & 58.60 & 35 \\
   &      & $\pi5$ & 52.02 & 100.00 & 19 & 20.40 & 100.00 & 30 & 4.84 & 29.21 & 35 \\
   &      & $\pi6$ & 47.85 & 100.00 & 21 & 18.55 & 100.00 & 29 & 4.29 & 39.81 & 35 \\
   &      & $\pi7$ & 37.68 & 100.00 & 24 & 11.97 & 100.00 & 32 & 3.61 & 25.55 & 35 \\
   &      & $\pi8$ & 27.78 & 100.00 & 26 & 12.14 & 100.00 & 33 & 4.02 & 26.06 & 35 \\
   & 0.1  & $\pi1$ & 46.84 & 100.00 & 21 & 11.89 & 100.00 & 33 & 3.47 & 41.73 & 35 \\
   &      & $\pi2$ & 49.39 & 100.00 & 21 & 15.23 & 100.00 & 34 & 3.55 & 40.59 & 35 \\
   &      & $\pi3$ & 57.48 & 100.00 & 18 & 21.44 & 100.00 & 29 & 4.78 & 28.76 & 35 \\
   &      & $\pi4$ & 60.68 & 100.00 & 18 & 19.71 & 100.00 & 31 & 3.62 & 19.90 & 35 \\
   &      & $\pi5$ & 73.48 & 100.00 & 10 & 23.79 & 100.00 & 29 & 4.05 & 14.26 & 35 \\
   &      & $\pi6$ & 57.68 & 100.00 & 18 & 23.00 & 100.00 & 31 & 6.56 & 80.02 & 35 \\
   &      & $\pi7$ & 41.52 & 100.00 & 22 & 8.76  & 100.00 & 33 & 1.70 & 7.87  & 35 \\
   &      & $\pi8$ & 30.19 & 100.00 & 26 & 6.75  & 100.00 & 34 & 1.70 & 15.50 & 35 \\
   & 0.01 & $\pi1$ & 61.19 & 100.00 & 16 & 2.94  & 26.59  & 35 & 3.98 & 29.00 & 35 \\
   &      & $\pi2$ & 60.12 & 100.00 & 19 & 5.40  & 49.36  & 35 & 3.37 & 20.51 & 35 \\
   &      & $\pi3$ & 65.69 & 100.00 & 14 & 15.40 & 100.00 & 31 & 1.87 & 12.09 & 35 \\
   &      & $\pi4$ & 77.25 & 100.00 & 10 & 16.77 & 100.00 & 30 & 2.39 & 13.24 & 35 \\
   &      & $\pi5$ & 73.10 & 100.00 & 12 & 17.29 & 100.00 & 30 & 2.08 & 11.25 & 35 \\
   &      & $\pi6$ & 60.01 & 100.00 & 18 & 16.08 & 100.00 & 30 & 1.71 & 8.97  & 35 \\
   &      & $\pi7$ & 58.23 & 100.00 & 17 & 24.62 & 100.00 & 30 & 1.70 & 6.82  & 35 \\
   &      & $\pi8$ & 34.99 & 100.00 & 25 & 4.95  & 51.47  & 35 & 0.97 & 6.53  & 35 \\ \hline
\end{tabular}}
		\caption{Computational results for instances with 10000 items and different correlations between penalties and weights: time (s) and number of optima over 35 instances.} 
		\label{tab:CPUTimePKP9}
\end{table}

\fi{}

\section{Approximation results}
\label{PKP:APPROX}

%\subsection{Negative approximation result}
%\label{PKP:no_approx_algo}
In this section we investigate the approximability of PKP.
The classical 0--1 Knapsack Problem admits fully polynomial time approximation schemes (FPTAS), see, e.g.~\cite{KePfPi04}. 
PKP has ``only'' an additional penalty to consider in the objective with respect to KP. Thus, one might expect some straightforward approximation algorithm for this problem as well. Nonetheless, we prove here the general result that no polynomial time approximation algorithm exists for PKP (under $\mathcal{P}\neq\mathcal{NP}$).

\begin{theorem}\label{th:PKPnoapprox}
PKP does not admit a polynomial time algorithm with a bounded approximation ratio unless $\mathcal{P}=\mathcal{NP}$.
\end{theorem}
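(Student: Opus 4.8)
The plan is to exploit the fact that, for a maximization problem, an algorithm with a bounded approximation ratio must be able to tell a strictly positive optimum apart from a zero optimum. Observe first that $z^*\ge 0$ always holds, since the empty packing (all $x_j=0$, $\Pi=0$) is feasible with objective value $0$. Hence, if I can build PKP instances for which $z^*$ equals either $0$ or a strictly positive integer, and deciding which case occurs is $\mathcal{NP}$--hard, then no polynomial algorithm can approximate PKP within any finite ratio $r$: on an instance with $z^*\ge 1$ such an algorithm would have to return a feasible solution of value at least $z^*/r>0$, whereas on an instance with $z^*=0$ every feasible solution has value at most $0$.

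I would obtain such instances by a reduction from the $\mathcal{NP}$--complete \textsc{Partition} problem. Given positive integers $a_1,\dots,a_n$ with $\sum_{i=1}^n a_i = 2B$, I create a PKP instance on the same $n$ items by setting $p_i=w_i=a_i$ and $\pi_i=B-1$ for every $i$, with capacity $c=B$. Since all penalties coincide, every nonempty solution incurs the fixed penalty $B-1$, so its objective value is $(\sum_{i\in T} a_i)-(B-1)$ for the chosen item set $T$ with $\sum_{i\in T}a_i\le B$. The largest attainable subset sum not exceeding $B$ equals $B$ precisely when \textsc{Partition} is a yes--instance, and is at most $B-1$ otherwise, because all data are integral. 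Consequently $z^*=1$ if the \textsc{Partition} instance is a yes--instance, and $z^*=0$ otherwise, since then the empty packing dominates every nonempty one.

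Combining the two observations completes the argument. Suppose, for contradiction, that a polynomial-time algorithm $A$ returns for every instance a feasible solution of value $z^A\ge z^*/r$ for some fixed $r\ge 1$. Running $A$ on the instance above yields $z^A\ge 1/r>0$ whenever \textsc{Partition} is a yes--instance; since the profits and the penalty are integral, $z^A$ is an integer, and therefore $z^A\ge 1$. When \textsc{Partition} is a no--instance we have $z^*=0$, whence $z^A\le 0$. Thus the test $z^A>0$ decides \textsc{Partition} in polynomial time, forcing $\mathcal{P}=\mathcal{NP}$. I expect the only delicate point to be the treatment of the ratio in the presence of a zero optimum: the construction must produce a genuine \emph{gap} (value $0$ versus value at least $1$), so that the impossibility holds even for an arbitrarily large finite ratio. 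This is exactly where the integrality of the data is used, as it guarantees $z^*\in\{0,1\}$ and rules out no--instances having a small positive optimum.
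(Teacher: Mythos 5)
Your proof is correct and follows essentially the same route as the paper: a gap-creating reduction from a subset-sum-type problem (you use \textsc{Partition}, the paper uses Subset Sum) with the identical construction $p_j=w_j$, uniform penalties equal to the target minus one, and capacity equal to the target, forcing $z^*\in\{0,1\}$ so that any bounded-ratio algorithm would decide the $\mathcal{NP}$--complete problem. The only difference is the choice of source problem, which is immaterial since \textsc{Partition} is a special case of Subset Sum.
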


\begin{proof} 
The theorem is proved by reduction from the Subset Sum Problem (SSP). 
Given $n$ items with integer weights $w'_j$ ($j= 1,\dots,n$) and a value $W'$  (with $\sum_{j=1}^{n} w'_{j} > W'$), we recall that the decision version of SSP is an NP--complete problem and asks whether there exists a subset of items represented by $x^*$ such that $\sum_{j=1}^{n} w'_{j}x_{j}^*=W'$.

We build an instance of PKP with $n$ items, profits and weights $p_{j}=w_{j}=w_j'$, penalties $\pi_j = W' - 1$ ($j= 1,\dots,n$ ) and capacity $c=W'$. 
The capacity constraint implies that for every feasible solution there is 
$\sum_{j=1}^{n}p_jx_j=\sum_{j=1}^{n} w_jx_j \leq W'$. The penalty value will be equal to either $W' - 1$ if we pack at least one item or 0 otherwise, therefore the optimal solution of this PKP instance is bounded by $\sum_{j=1}^{n}p_jx_j - (W'-1) \leq 1$.
Not placing any item in the knapsack attains the trivial solution with value equal to $0$. By integrality of the input data, this limits the optimal solution value
to $0$ or $1$, where the latter value can be reached if and only if the Subset Sum Problem has a solution.

Hence, if there was a polynomial time algorithm for PKP with a bounded approximation ratio, we could decide SSP just by checking if the approximate solution of PKP is strictly positive. Clearly this is not possible unless $\mathcal{P} = \mathcal{NP}$. %under the assumption that $\mathcal{P} \neq \mathcal{NP}$.
\end{proof}

%\subsection{Positive approximation results}
%\label{sec:posapprox}

%\bigskip
%(To discuss: should we also insert somewhere the negative case $p_{max} \geq \pi_j$?)

While the result of Theorem~\ref{th:PKPnoapprox} rules out any reasonable approximation 
for the general case, one can impose mild restrictions on the input data which still permit
fully polynomial time approximation algorithms.
All our results are based on the following simple approximation algorithm:
Similar to the exact algorithm sketched in Section~\ref{sec:themodelPKP}
we consider all $n$ choices for the penalty value, namely $\Pi \in \{\pi_1, \ldots, \pi_n\}$.
For each choice $j$ of the leading item with $\Pi=\pi_j$,
we compute a suboptimal solution of problem $PKP_{j}^+$ by 
packing item $j$ into the knapsack and applying a $(1-\delta)-$approximation algorithm
for the remaining knapsack problem with capacity $c-w_j$ and item set $\{j+1,\ldots,n\}$.
The optimal solution value of the latter problem will be denoted as $z^R_j$.

As an output of the resulting approximation algorithm $A(\delta)$ with
objective function value $z^A(\delta)$ we
use the maximum value obtained over all $n$ iterations (including the empty set).
For a constant $\delta>0$, algorithm $A(\delta)$ can be performed by 
running $n$ times an FPTAS for KP.
%For the analysis we will denote the leading item of the optimal solution as $j^*$.
%In the analysis we will assume integer profits and penalties.
Note that if $z^*=0$, then also $A(\delta)$ will output a value of $0$.
Thus we can assume by integrality of the input data: 
\begin{equation}\label{eq:positive}
z^*=p_{j^*}+z^R_{j^*} - \pi_{j^*} \geq 1.
\end{equation}
As a general bound on the performance of $A(\delta)$ we get:
\begin{eqnarray*}
z^A(\delta) &\geq&
p_{j^*} + (1-\delta) z^R_{j^*} - \pi_{j^*} \\
&=&  
(1-\eps) z^* +(\eps-\delta)z^R_{j^*} - \eps (\pi_{j^*} - p_{j^*})
\end{eqnarray*}
Hence, we obtain an FPTAS for a suitable choice of $\delta \leq \eps$ if we can prove:
\begin{equation}\label{eq:claim}
(\eps-\delta)z^R_{j^*} \geq  \eps (\pi_{j^*} - p_{j^*})
\end{equation}
Note that whenever inequality $p_{j^*} \geq \pi_{j^*}$ is implied, condition (\ref{eq:claim}) is trivially satisfied for any $\delta \leq \eps$.\\
In the following we will describe four relevant cases 
%assumptions on the input data
which all permit an FPTAS for PKP. 
We start with the case where each item has a profit greater than (or equal to) its penalty value.
\begin{prop}\label{case2}
If $p_j \geq \pi_j$ for $j=1, \ldots, n$, then
algorithm $A(\delta)$ is an FPTAS for PKP.
\end{prop}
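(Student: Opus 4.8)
The plan is to verify condition (\ref{eq:claim}) under the stated hypothesis, since the general bound established immediately above the proposition shows that (\ref{eq:claim}) is precisely what guarantees $z^A(\delta) \geq (1-\eps)z^*$. First I would fix $\delta = \eps$ (any choice $\delta \leq \eps$ serves equally well), so that the coefficient $(\eps-\delta)$ multiplying the left-hand side of (\ref{eq:claim}) is non-negative.

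The central observation is that the hypothesis $p_j \geq \pi_j$ holds for \emph{every} item, hence in particular for the leading item $j^*$ of the optimal solution, which gives $\pi_{j^*} - p_{j^*} \leq 0$. Consequently the right-hand side $\eps(\pi_{j^*} - p_{j^*})$ of (\ref{eq:claim}) is non-positive. On the other hand, $z^R_{j^*}$ is the optimal value of an ordinary knapsack instance and is therefore non-negative, so the left-hand side $(\eps-\delta)z^R_{j^*}$ is non-negative. Thus (\ref{eq:claim}) holds trivially, and substituting back into the general bound yields $z^A(\delta) \geq (1-\eps)z^*$, the desired approximation guarantee.

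It then remains to confirm the running time. Executing $A(\delta)$ with $\delta=\eps$ amounts to running a KP FPTAS $n$ times, once for each candidate leading item, each invocation being polynomial in the input size and in $1/\eps$; taking the best of the $n$ outcomes together with the empty solution preserves this bound. Hence $A(\eps)$ runs in time polynomial in the input and in $1/\eps$, i.e.\ it is an FPTAS.

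I do not expect any real obstacle in this case: all the difficulty was absorbed into deriving the general performance bound preceding the proposition, and the hypothesis renders the penalty term harmless. The only points one must not overlook are that $z^R_{j^*} \geq 0$, so that the sign argument in (\ref{eq:claim}) goes through, and that the $n$-fold repetition of a KP FPTAS does not destroy the FPTAS property.
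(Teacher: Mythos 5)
Your proof is correct and follows essentially the same route as the paper: set $\delta:=\eps$ and observe that under the hypothesis $p_{j^*}\geq\pi_{j^*}$ the right-hand side of (\ref{eq:claim}) is non-positive while the left-hand side is non-negative, so the condition holds trivially. Your additional remarks on $z^R_{j^*}\geq 0$ and on the polynomial running time of the $n$-fold FPTAS invocation are consistent with what the paper establishes before the proposition.
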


\begin{proof} 

%\bigskip
%{\bf Profits $\geq$ Penalties:}
%We assume that $p_j \geq \pi_j$ for all $j=1, \ldots, n$.
%Now 
It is trivial to see that setting $\delta:=\eps$ 
the right-hand side of (\ref{eq:claim}) is always less than (or equal to) zero
for every $j$ and thus the inequality is always fulfilled.
\end{proof} 

We henceforth assume the restricting condition $\pi_{j^*} - p_{j^*} \geq 0$. 
We first consider the case where penalties are bounded by a given constant $C$.

\begin{prop}\label{case3}
If $\pi_j + 1 \leq C$ for $j=1, \ldots, n$ and a constant $C$, algorithm $A(\delta)$ constitutes an FPTAS.
\end{prop}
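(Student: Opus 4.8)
The plan is to exhibit a choice of $\delta$ that is a constant fraction of $\eps$ (depending only on $C$) for which the key inequality (\ref{eq:claim}) holds, so that algorithm $A(\delta)$ becomes an FPTAS. As in the preceding cases, the whole argument reduces to verifying (\ref{eq:claim}) at the leading item $j^*$ of the optimal solution, since the generic bound derived earlier in the section already shows that (\ref{eq:claim}), for a suitable $\delta\leq\eps$, yields the desired $(1-\eps)$-guarantee.

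First I would extract a lower bound on $z^R_{j^*}$ directly from (\ref{eq:positive}). Rearranging gives $z^R_{j^*}=z^*+(\pi_{j^*}-p_{j^*})$, and since $z^*\geq 1$ and we are in the regime $\pi_{j^*}-p_{j^*}\geq 0$, this yields $z^R_{j^*}\geq 1+(\pi_{j^*}-p_{j^*})$. Writing $D:=\pi_{j^*}-p_{j^*}\geq 0$ for brevity, the hypothesis $\pi_{j^*}+1\leq C$ together with $p_{j^*}\geq 0$ gives $D\leq\pi_{j^*}\leq C-1$, hence $1+D\leq C$. This is the only point at which the constant bound on the penalties enters the argument.

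Next I would substitute these bounds into (\ref{eq:claim}). Keeping $\delta\leq\eps$ makes the factor $(\eps-\delta)$ non-negative, so the lower bound on $z^R_{j^*}$ gives $(\eps-\delta)z^R_{j^*}\geq(\eps-\delta)(1+D)$. It then suffices to establish $(\eps-\delta)(1+D)\geq\eps D$, which after expanding and cancelling the $\eps D$ terms is equivalent to $\eps\geq\delta(1+D)$. Choosing $\delta:=\eps/C$ and using $1+D\leq C$ bounds the right-hand side by $\delta C=\eps$, so the inequality holds; note also that $C\geq 1$ guarantees $\delta\leq\eps$, keeping the sign condition consistent.

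The final step is to confirm the running time. Because $C$ is a constant, $\delta=\eps/C$ is a constant multiple of $\eps$, so each of the $n$ calls to the KP-FPTAS runs in time polynomial in $n$ and $1/\eps$, and the overall effort stays polynomial, giving an FPTAS. I do not expect a genuine obstacle here, since the argument is a direct substitution; the only point requiring care is tracking the sign conditions ($\eps-\delta\geq 0$ and $D\geq 0$) so that the chain of inequalities runs in the intended direction.
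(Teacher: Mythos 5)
Your proof is correct and takes essentially the same route as the paper's: the same choice $\delta=\eps/C$, the same lower bound $z^R_{j^*}\geq \pi_{j^*}-p_{j^*}+1$ drawn from (\ref{eq:positive}), and a direct verification of condition (\ref{eq:claim}). The only minor difference is the closing algebra: you reduce the target inequality to $\eps \geq \delta\left(1+\pi_{j^*}-p_{j^*}\right)$ and close it using only $p_{j^*}\geq 0$ and $\pi_{j^*}\leq C-1$, whereas the paper expands the product and invokes integrality of profits (effectively $p_{j^*}\geq 1$) in its final step (\ref{eq:verylast}), so your variant is marginally more robust but not a genuinely different argument.
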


\begin{proof} 
%\bigskip
%{\bf Bounded penalties:}
We can choose $\delta := \frac{\eps}{C}$ and consider from (\ref{eq:positive}) that $z^R_{j^*} \geq \pi_{j^*} - p_{j^*} + 1$. 
%We assume that there is a given constant $C$ with 
%In particular, $\pi_j +1 \leq C$.
%Choosing $\delta \leq \frac{\eps}{C}$ and plugging in (\ref{eq:positive}), 
%it can be shown that algorithm $A(\delta)$ is an FPTAS for PKP.
We have that:
\begin{eqnarray}
(\eps-\delta)z^R_{j^*} 
&\geq &
(\eps-\delta)(\pi_{j^*} - p_{j^*} + 1)\\
&=&\eps(\pi_{j^*} - p_{j^*} + 1) - 
\frac{\eps}{C} (\pi_{j^*} - p_{j^*} + 1) \label{eq:bounddelta}\\
&\geq&\eps(\pi_{j^*} - p_{j^*}) +\eps - 
\frac{\eps}{C} (C - p_{j^*} + 1) \\
&=& \eps(\pi_{j^*} - p_{j^*}) + 
\frac{\eps}{C} (p_{j^*} - 1) \\
&\geq&  \eps(\pi_{j^*} - p_{j^*}) \label{eq:verylast}
\end{eqnarray}
%For equality (\ref{eq:bounddelta}) we assume that $\pi_{j^*} - p_{j^*} + 1 \geq 0$ holds, because otherwise the above case applies. 
The last inequality (\ref{eq:verylast}) follows from the integrality of profits. Hence, condition (\ref{eq:claim}) is shown.
\end{proof} 

%\bigskip
%{\bf Penalties minus Profits is bounded:}
As generalization of the case in Proposition \ref{case2}, we consider for each item smaller profits than penalties and manage to derive an FPTAS as long as this difference is bounded by a constant.
\begin{prop}\label{case4}
PKP admits an FPTAS if $\pi_j - p_j \leq C$ for $j=1, \ldots, n$ and a constant $C$.
\end{prop}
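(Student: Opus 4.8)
The plan is to reuse the verification template for condition (\ref{eq:claim}) with a constant $\delta$ calibrated to $C$. The first step is to read off from (\ref{eq:positive}) the lower bound $z^R_{j^*} \geq \pi_{j^*} - p_{j^*} + 1$; the additive $+1$ here --- a consequence of integrality --- is exactly the reserve of value that will pay for the approximation error, just as in the proof of Proposition~\ref{case3}.

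The second step is to choose $\delta := \eps/(C+1)$, the natural analogue of the choice $\delta = \eps/C$ made in Proposition~\ref{case3}. This keeps $\delta \leq \eps$ (since $C \geq 0$, so that $\eps - \delta \geq 0$ and $A(\delta)$ is well defined), while $1/\delta = (C+1)/\eps$ remains polynomial in $1/\eps$ for fixed $C$; hence running a KP FPTAS $n$ times still yields an FPTAS overall.

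The third step is the substitution itself. Plugging $z^R_{j^*} \geq \pi_{j^*}-p_{j^*}+1$ into the left-hand side of (\ref{eq:claim}) and expanding, the expression reduces to
\begin{equation*}
\eps(\pi_{j^*}-p_{j^*}) + \eps - \frac{\eps}{C+1}\,(\pi_{j^*}-p_{j^*}+1).
\end{equation*}
Invoking the hypothesis $\pi_{j^*}-p_{j^*}\leq C$ to bound $\pi_{j^*}-p_{j^*}+1 \leq C+1$ makes the final term at most $\eps$, so it is cancelled by the standalone $+\eps$, leaving precisely the right-hand side $\eps(\pi_{j^*}-p_{j^*})$ of (\ref{eq:claim}). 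This establishes the condition and therefore the FPTAS.

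I expect no genuine obstacle: the argument is structurally identical to Proposition~\ref{case3}, generalizing Proposition~\ref{case2} by allowing the gap $\pi_j - p_j$ to be positive but bounded. The only real design decision is getting the denominator of $\delta$ right (namely $C+1$ rather than $C$), which is dictated by the worst case $\pi_{j^*}-p_{j^*}=C$ in the bound $\eps \geq \delta(\pi_{j^*}-p_{j^*}+1)$. The one point worth flagging is that the whole construction rests on integrality of the data through the $+1$ slack in $z^R_{j^*} \geq \pi_{j^*}-p_{j^*}+1$, so, like the earlier cases, the result presumes integer profits and penalties.
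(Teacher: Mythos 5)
Your proof is correct and uses essentially the same argument as the paper: the identical choice $\delta=\eps/(C+1)$, the bound $z^R_{j^*}\geq \pi_{j^*}-p_{j^*}+1$ from (\ref{eq:positive}), and the hypothesis $\pi_{j^*}-p_{j^*}+1\leq C+1$ to verify condition (\ref{eq:claim}). The only (immaterial) difference is the order of the two inequalities: the paper first replaces $C+1$ by $\pi_{j^*}-p_{j^*}+1$ inside $\delta$ and then invokes (\ref{eq:positive}), whereas you substitute (\ref{eq:positive}) first and then use the hypothesis to cancel the residual term.
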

\begin{proof} 
%we can still derive an FPTAS if profits are smaller than penalties, as long as this difference is bounded by a constant.
%Formally, we assume that there is a constant $C$ such that $\pi_j - p_j \leq C$ for all $j=1, \ldots, n$.
%As before, we restrict our attention to the case $\pi_{j^*} - p_{j^*} \geq 0$.
By choosing $\delta:= \frac{\eps}{C+1}$ we get:
\begin{eqnarray}
(\eps-\delta)z^R_{j^*} 
&=&
(\eps-\frac{\eps}{C+1})z^R_{j^*} \\
&\geq&
(\eps-\frac{\eps}{\pi_{j^*} - p_{j^*}+1})z^R_{j^*} \\
&\geq&
(\eps-\frac{\eps}{\pi_{j^*} - p_{j^*}+1})(\pi_{j^*} - p_{j^*} + 1) \label{eq:secIn} \\
&=& 
\eps(\pi_{j^*} - p_{j^*}) + \eps - \eps
\end{eqnarray}
For inequality (\ref{eq:secIn}) we invoke again (\ref{eq:positive}). This shows condition (\ref{eq:claim}).
\end{proof}

%\bigskip
%{\bf Profits and Penalties in a bounded interval:}

%Let us consider now 
Finally, consider the case where there exists a bounded interval containing all profit and penalty values.
This can be expressed by assuming a constant parameter $\rho \in (1/2,1)$ with
$p_{\min} \geq \rho\cdot \pi_{\max}$, i.e.\ all profits are not too small 
compared to the largest penalty. On the other hand, profits can well be arbitrarily large. The following proposition holds.
\begin{prop}\label{case5}
There is an FPTAS for PKP if $p_{\min} \geq \rho\cdot \pi_{\max}$ with $\rho \in (1/2,1)$.
\end{prop}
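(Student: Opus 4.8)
The plan is to instantiate the general framework already established for algorithm $A(\delta)$ and simply verify condition (\ref{eq:claim}) for a suitably small constant $\delta$. As in the preceding cases I would retain the standing assumption $\pi_{j^*}-p_{j^*}\geq 0$, since otherwise (\ref{eq:claim}) holds trivially. The whole task then reduces to controlling the ratio $(\pi_{j^*}-p_{j^*})/z^R_{j^*}$ by exploiting the hypothesis $p_{\min}\geq\rho\,\pi_{\max}$, which simultaneously forces the numerator to be small and the denominator to be large.

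The key step, and the one I expect to be the crux of the argument, is a lower bound on $z^R_{j^*}$. From (\ref{eq:positive}) we have $z^R_{j^*}\geq\pi_{j^*}-p_{j^*}+1\geq 1>0$, so the optimal solution of the residual knapsack problem must pack at least one item. Since \emph{every} item has profit at least $p_{\min}$, this observation upgrades the crude bound $z^R_{j^*}\geq 1$ into the essential estimate
\[
z^R_{j^*}\geq p_{\min}\geq\rho\,\pi_{\max}\geq\rho\,\pi_{j^*}.
\]
For the numerator I would use $p_{j^*}\geq p_{\min}\geq\rho\,\pi_{\max}\geq\rho\,\pi_{j^*}$ to obtain $\pi_{j^*}-p_{j^*}\leq(1-\rho)\,\pi_{j^*}$. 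Converting the profit hypothesis into a lower bound on $z^R_{j^*}$ via the ``at least one item is packed'' observation is really the heart of the matter; everything else is routine arithmetic.

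It then remains to pick $\delta$. The two bounds suggest the choice $\delta:=\eps\,\frac{2\rho-1}{\rho}$, which is a positive constant precisely because $\rho>1/2$, and which satisfies $\delta<\eps$ since $\frac{2\rho-1}{\rho}<1$ for $\rho<1$. With this value one computes $\eps-\delta=\eps\,\frac{1-\rho}{\rho}$, so that
\[
(\eps-\delta)\,z^R_{j^*}\;\geq\;\eps\,\frac{1-\rho}{\rho}\cdot\rho\,\pi_{j^*}\;=\;\eps\,(1-\rho)\,\pi_{j^*}\;\geq\;\eps\,(\pi_{j^*}-p_{j^*}),
\]
which is exactly condition (\ref{eq:claim}). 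Since $\delta$ depends only on the constants $\eps$ and $\rho$, algorithm $A(\delta)$ runs the inner FPTAS for KP with a constant accuracy parameter $n$ times, yielding an FPTAS for PKP. I would close by remarking that the restriction $\rho>1/2$ is not incidental: it is precisely what guarantees $2\rho-1>0$ and hence a strictly positive, constant gap $\eps-\delta$.
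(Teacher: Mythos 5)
Your proof is correct and follows essentially the same route as the paper: the same choice $\delta=\eps\frac{2\rho-1}{\rho}$, the same key lower bound $z^R_{j^*}\geq p_{\min}\geq\rho\,\pi_{\max}$, and the same closing arithmetic verifying condition (\ref{eq:claim}). The only (immaterial) difference is how you justify that the residual solution packs at least one item: you derive $z^R_{j^*}\geq 1$ from (\ref{eq:positive}) under the standing assumption $\pi_{j^*}\geq p_{j^*}$, whereas the paper handles the case where the optimum consists of the leading item alone by noting $A(\delta)$ is then exact.
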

\begin{proof} 
If the optimal solution consists only of the leading item, 
then $A(\delta)$ also yields the optimum.
Thus, we have $z^R_{j^*} \geq p_{\min} \geq \rho\, \pi_{\max}$.
Choosing
$\delta := \eps\frac{2\rho-1}{\rho}$ (note that $\delta>0$ for $\rho > 1/2$) 
we get:
\begin{eqnarray}
(\eps-\delta)z^R_{j^*} 
&\geq&
(\eps-\delta) \rho\, \pi_{\max}\\
&=&
(\eps-\eps\frac{2\rho-1}{\rho}) \rho\, \pi_{\max}\\
&=&
(\eps\rho-2\eps\rho +\eps) \pi_{\max}\\
&=&
\eps(1-\rho) \pi_{\max}\\
&\geq&
\eps (\pi_{\max} - p_{\min}) \\
&\geq &
\eps(\pi_{j^*} - p_{j^*}) 
\end{eqnarray}
This shows again condition (\ref{eq:claim}).
\end{proof}

%\bigskip
%{\bf Comment on $A(\delta)$}:
%\begin{remark}
We remark that, although we cannot exclude other approximation schemes for PKP, it seems hard to construct any meaningful approximation algorithm different from $A(\delta)$.
%It is clear, that there exists a leading item $j^*$ in the optimal solution and thus considering 
The algorithm considers each term $p_{j} -\pi_{j}$ explicitly  
and thus will also include the part of the optimal solution value contributed by 
the leading item $p_{j^*} -\pi_{j^*}$. 
Then, the knapsack problem $z^R_{j^*}$ is solved by a $(1-\delta)-$approximation with a suitably chosen parameter $\delta$ 
which is the best one can do for the remaining sub--problem.
%that appears to be the best option for approximating such standard KP. 
%What can we do better than a $(1-\delta)-$approximation with a suitably chosen parameter $\delta$?
%However, it is hard to state that ``this is the best possible approximation''...
%\end{remark}

\section{Conclusions}
\label{sec:PKPconclusion}
We proposed a dynamic programming based exact approach for PKP which leverages an algorithmic framework originally constructed for KP. The proposed approach turns out to be very effective in solving instances of the problem with up to 10000 items and favorably compares to both solver CPLEX 12.5 and an exact algorithm in the literature. 
From a theoretical point of view we also show that PKP can be solved in the same
pseudopolynomial running time $O(nc)$ as the standard knapsack problem.
%thus improving upon an earlier $O(n^2 c)$ algorithm.
We also gave further insights on the structure and properties of PKP by providing a characterization of its linear relaxation and an effective procedure to compute upper bounds on the problem.
By studying the approximability of PKP, we showed rather surprisingly that
there is no polynomial time approximation algorithm with bounded approximation ratio, while imposing some mild conditions on the input of PKP allows an FPTAS.
In future research, we will investigate extensions of our procedures to other KP generalizations. It would also be interesting to evaluate the performances of our approach on new benchmark and challenging PKP instances. % \ref{appendixA:Proofs}.

\subsection*{Acknowledgments}

We thank the authors of \cite{CeRi06} for providing us with the code of their algorithm and the generation scheme of the instances.\\

Rosario Scatamacchia was supported by a fellowship from TIM Joint Open Lab SWARM (Turin, Italy).
Ulrich Pferschy was supported by the University of Graz project "Choice-Selection-Decision".

%\section*{References}
%\label{sec:bib}

\bibliographystyle{plain}  
\bibliography{Ref}

\end{document}